\tikzstyle{nodestyle}=[shape=circle,shading=ball,ball color=red,draw]
\tikzstyle{edgestyle}=[thick,color=black!50!green,draw]
\tikzstyle{line a}=[draw, line width=0.2mm, color=black]
\tikzstyle{line b}=[draw, line width=0.2mm, color=red   , dash pattern=on 6 off 2]
\tikzstyle{line c}=[draw, line width=0.2mm, color=blue  , dash pattern=on 4 off 2]
\tikzstyle{line d}=[draw, line width=0.2mm, color=green , dash pattern=on 2 off 2]
\tikzstyle{line e}=[draw, line width=0.2mm, color=yellow, dash pattern=on 4 off 2 on 1 off 2]
\tikzstyle{line f}=[draw, line width=0.2mm, color=pink  , dash pattern=on 1 off 1 on 2 off 1]
\tikzstyle{line g}=[draw, line width=0.2mm, color=red!30]
\tikzstyle{xytics}=[draw, line width=0.2mm, color=black]
\tikzstyle{xtics label}=[anchor=north, inner sep=3, color=black]
\tikzstyle{ytics label}=[anchor=east, inner sep=3, color=black]
\tikzstyle{legend label}=[anchor=west, inner sep=5, color=black]
\tikzstyle{bars}=[draw, line width=0.2mm, color=black]
\tikzstyle{filledbars}=[draw, line width=0.2mm, color=black, fill=black!30]
\newcommand{\realspace}{\mathbb{R}}			
\newcommand{\vecspace}[2]{#1^{#2}}			
\newcommand{\rvecspace}[1]{\vecspace{\realspace}{#1}}	
\newcommand{\vecnorm}[1]{\left\|#1\right\|}		
\newcommand{\mattran}[1]{#1^{\text{T}}}			
\newcommand{\vecone}{{\bm 1}}			
\newcommand{\veczero}{{\bm 0}}			
\DeclareMathOperator*{\argmin}{\arg\min}
\DeclareMathOperator*{\tr}{tr}
\DeclareMathOperator*{\spspan}{span}
\newtheorem{thm}{Theorem}
\newtheorem{prop}{Proposition}
\newtheorem{dfn}{Definition}
\begin{document}

\title{A Fast Successive Over-Relaxation Algorithm for Force-Directed Network Graph Drawing }
\author{WANG Yong-Xian$^*$ and WANG Zheng-Hua \\
National University of Defense Technology,  Changsha  410073, China \\
* Corresponding author (email: yxwang@nudt.edu.cn)}

\date{Cite this article as:
\textcolor{red}{Wang, Y. \& Wang, Z. Sci. China Inf. Sci. (2012) 55: 677. https://doi.org/10.1007/s11432-011-4208-9}
\\
The final publication is available at \url{http://link.springer.com} via \url{http://dx.doi.org/10.1007/s11432-011-4208-9}.}
\maketitle

\begin{abstract}
Force-directed approach is one of the most widely used methods in graph drawing research. There are two main problems with the traditional force-directed algorithms. First, there is no mature theory to ensure the convergence of iteration sequence used in the algorithm and further, it is hard to estimate the rate of convergence even if the convergence is satisfied. Second, the running time cost is increased intolerablely in drawing large- scale graphs, and therefore the advantages of the force-directed approach are limited in practice. This paper is focused on these problems and presents a sufficient condition for ensuring the convergence of iterations. We then develop a practical heuristic algorithm for speeding up the iteration in force-directed approach using a successive over-relaxation (SOR) strategy. The results of computational tests on the several benchmark graph datasets used widely in graph drawing research show that our algorithm can dramatically improve the performance of force-directed approach by decreasing both the number of iterations and running time, and is 1.5 times faster than the latter on average.
\\
\textbf{keywords:} graph drawing, graph layout, successive over-relaxation, force-directed algorithm
\end{abstract}

\section{Introduction}
\label{sec:fds-1}

The visualization of large-scale complex networks paves the way for direct and further researches on complex networks. Recently, the network visualization problems, especially the problem of how to lay out networks in the 2-D or even the 3-D space aesthetically and efficiently have been attracting more and more interest.

Let's briefly review the graph drawing or layout of large-scale net graph first. In the graph theory, networks are defined as graphs consisting of two type sets: vertex sets and edge sets (sets of relations between vertices), i.e. $G = (V, E)$, where $V = \{ 1, 2, \dots, n\}$ is vertex set and $E \subset V \times V$ denotes the corresponding edge set. Some type of graphs, like the trees or the forests, have simple structures, but the others can be complicated, such as the flowcharts, the graphic representation for the real networks like the internet and so on. Graph visualization is to represent graphs in a plane (or three-dimensional space) in the form of picture, with vertices drawn as points and edges as lines connected with a pair of vertices. A challenging problem of automatic layout of graph is how, or how efficiently, to draw a graph with good aesthetics, as well as good structural properties of the original abstract graph. 

This paper will show how to obtain the layout of undirected graphs in a plane (or higher-dimensional space) aesthetically and efficiently. Throughout this paper, we assume that in the layout, edges can only be drawn as line segments instead of arcs.

Extensive work on the layout of graphs has been carried out. The earlier work mainly studied the layout of some specific graph types. Refs.~\cite{Batini:1986:LAD,Tamassia:1988:AGD,Carpano:1980:ADH,Sugiyama:1981:MVU,Rowe:1987:BDG} focused on the flowchart and some graphs with hierarchical structure, where the main idea is to arrange the vertices as some regular structures, such as grid \cite{Batini:1986:LAD,Tamassia:1988:AGD}, cicle \cite{Carpano:1980:ADH}, and some parallel lines \cite{Rowe:1987:BDG,Sugiyama:1981:MVU}. The time complexity of the algorithms is $\Theta (\left| V \right|^2)$. 
Eades \emph{et al} \cite{Eades:1984:AHG} proposed a new graph layout algorithm for the VLSI problem. In their algorithms, each vertex was imagined as a steel ball and each edge as a spring connected with both balls; thus the whole network of balls formed a mechanical system. Given an initial placement of all steel balls, the free stretching or squeezing of springs adjusts all the balls’ position dynamically, leading the ball-spring network to a equilibrium state finally after a sufficiently long time. This algorithm has $\Theta (\left| E \right|)$ time complexity. The sparser the edges in the graph, the more efficient the algorithm. Kamada \emph{et al} \cite{Kamada:1988:ADN,Kamada:1989:ADG} generalized this spring model by introducing an “ideal-distance-between-two-vertices” idea into the model, and transformed the problem of graph layout into a problem of minimum stress of dynamic system.
Fruchterman \emph{et al} \cite{Fruchterman:1991:GDF} further regarded an undirected connected graph as a mechanics system. According to a thought similar to the spring model, they proposed the so-called force-directed majorization method. Davidson \emph{et al} \cite{Davidson:1989:DGN}, inspired by the very large-scale integrated circuit problem, developed another optimizing method to address the graph drawing problem. They considered the vertex distribution, the distance between vertices and target plane, edge lengths, and the crossover among edges in the design of the stress (object) function as different weights in simulating-annealing optimization according to different standards for aesthetics. The time complexity of algorithms of Kamada, Fruchterman, Davidson are all $\Theta (\left| E \right|)$.
Since then the force-directed majorization, which is trying to minimize a stress function, has become an active topic in the layout of graph. Cohen discussed the parameters in the force-directed majorization \cite{Cohen:1997:TDG}, and Kaufmann \emph{et al} gave a detailed review on the methods \cite{Kaufmann:2001:DGM}.
On the other hand, Leeuw \emph{et al} \cite{Leeuw:2000:GLT} and Gansner \emph{et al} \cite{Gansner:2004:GDS} found that in the mathematics force-directed majorization has a similar formula to the stress function in multidimensional scaling (MDS) problem, and based on this finding they exchanged algorithms between these two fields. In the classical MDS fields, Kruskal and Leeuw \emph{et al} \cite{Kruskal:1964:NMS,Kruskal:1964:MSO,Leeuw:1988:CMM} developed an iterative algorithm and studied its iterative convergence properties especially the iterative convergence speed under some conditions. Their fruitful results have laid the foundations for some popular layout algorithms. The force-directed majorization algorithm proposed by Kamada et al. used a “one-by-one” style to update each vertex’s position in the target plane. The classical MDS method instead used a “batch” mode to update all the vertices in each iteration. According to these differences, Gansner \emph{et al} \cite{Gansner:2004:GDS} considered the batch-update mode in the force-directed majorization, and made further speedup by using math library of linear algebra. In recent years, the focuses are on the following aspects: the design of the object stress function on the basis of aesthetics rules, the improvement of the placement of some ad-hoc complex network graph, and the speedup of the drawing of large-scale graphs. For example, Refs.~\cite{Dwyer:2007:IER,Dwyer:2008:CSM,Dwyer:2008:CGL} discussed a minimizing stress function with some extra restraints, Refs.~\cite{Huangjingwei:2000:YGX:CHN,Zhangqingguo:2006:YYC:CHN} employed a genetic algorithm scheme to solve the KK and / or FR problem, and Ref.~\cite{Zhangweiming:2008:JYZ:CHN} improved the layout of a special type of network graph whose vertex degree obeys the power-law distribution efficiently.

There are still some difficulties in using the force-directed majorization. The first one is the low performance. Typically when a graph has more than 1000 vertices, both the performance and the aesthetics would decline. The second difficulty lies in judging whether a given iterated sequence is convergent or not and if it is convergent, how to estimate the convergence speed theoretically. Although they can fulfill the conditions of ensuring the iterative convergence based on a series of hypothetical assumptions, it is hard to say these available algorithms have the same use value in practice. We will deal with these problem in this paper, and propose a new condition for ensuring the convergence of iteration for the force-directed layout theoretically and develop a heuristic method with more practical values based on successive over-relaxation technique to accelerate the computations.

The remainder of this paper is organized as follows. Sect.~\ref{sec:fds-2} gives a brief introduction to the force- directed layout. Sect.~\ref{sec:fds-3} describes a sufficiency condition which we propose based on the fixed point theory of non-linear system and an estimate to the convergence rate. Because this method cannot cover all the cases in applications, we develop a heuristic method based on successive over-relaxation technique from practical view to accelerate the iterations in force-directed layout. Sect.~\ref{sec:fds-4} discusses this method in detail. Sect.~\ref{sec:fds-5} gives some numerical experiments in the benchmark test data sets, as well as the effect on the performance of some parameters’ selection. Sect.~\ref{sec:fds-6} summarizes the work. The implementation of the algorithm in this paper and some extra data can be found in supplementary material available at \url{http://wang.yongxian.googlepages.com/graphdraw}.

\section{Force-directed Majorization}
\label{sec:fds-2}

We focus on how to layout or draw the graph $G$ in space $\rvecspace{d}$. Let $G=(V,E)$ be an undirected graph, and $V=\left\{ {1,2,\ldots ,n} \right\}$ be the set of vertices or nodes, and $E$ be the set of edges. Each vertex $i$ in $G$ can be represented as a vector in $\rvecspace{d}$, say, $x_i =\mattran{\left( {x_{i1},x_{i2},\ldots,x_{id} } \right)}$. We call matrix $X=\mattran{(\mattran{x_1},\mattran{x_2},\ldots ,\mattran{x_n} )}\in \rvecspace{n\times d}$ a placement of graph when the elements of $X$ take a specific set of values. In force-directed majorization a stress minimization procedure is used and the object stress function is defined as 
\begin{align}\label{eq1}
        \min f(X) &= \sum\limits_{1\le i<j\le n} {w_{ij} \big( {\left\| {x_i -x_j } \right\|-d_{ij} } \big)^2} 
\end{align}
where $\left\| \cdot \right\|$ is the norm in the Euclidean space $\rvecspace{d}$, $\left\| {x_i -x_j } \right\|$ is the real distance between vertex $i$ and vertex $j$ in the current placement $X$, and $d_{ij} $ is the predefined ideal distance between vertex $i$ and vertex $j$. The weights $w_{ij} \ge 0$ represent  the contribution to the total stress of vertex pair ($i$, $j$). By Eq.~\eqref{eq1}, only if every pair of vertices' real distance equals their ideal distance, does the object stress reach the minimum value 0. In graph layout applications drawing the graph in plane means $d=2$ and $d=3$ in a three-dimensional space. In this paper we do not assume any such specific $d$'s value.

To solve optimization problem \eqref{eq1}, we firstly introduce a definition of dominant function as follow.

\begin{dfn}[dominant Function]\label{dfn:2-1}
        Assuming a function $f:\rvecspace{n}\to \realspace$ and $g:\rvecspace{n}\to \realspace$, for all $x\in \rvecspace{n}$, we have $f(x)\ge g(x)$, we call $f$ is a dominant function of $g$.
\end{dfn}

We now give a group of functions of stress function defined in Eq.~\eqref{eq1} \cite{Gansner:2004:GDS}.

\begin{prop}\label{prop:2-1}
        Given any $X\in\rvecspace{n\times d}$ and $Y\in\rvecspace{n\times d}$, define $g:\rvecspace{n\times d}\times \rvecspace{n\times d}\to \mathbb{R}$ as
        \begin{align}\label{eq2}
                g(X,Y)=\tr(\mattran{X}L^WX)-2\tr(\mattran{X}L^YY)+C
        \end{align}
        where $\tr(\cdot )$ denotes the trace of a matrix (sum of diagonal elements of the matrix). $C=\sum\limits_{i<j} {w_{ij} } d_{ij}^2$ is a constant. $L^W=\left( {l_{ij}^W } \right)_{n\times n} $ and $L^Y=\left( {l_{ij}^Y } \right)_{n\times n} $ are both Laplacian matrices, and their elements are respectively defined as:
        \begin{align}
        l_{ij}^W &= \begin{cases}
                        -w_{ij},                        & i\not=j \\
                        -\sum_{k\not=i} l_{ik}^W,       & i=j
                \end{cases} \\
        l_{ij}^Y &= \begin{cases}
                        0, & i\not=j\text{ÇÒ}y_i = y_j \\
                        -\frac{w_{ij} d_{ij}}{\vecnorm{y_i - y_j}}, & i\not=j\text{ÇÒ}y_i \not= y_j \\
                        -\sum_{k\not=i} l_{ik}^Y, & i=j
                \end{cases}
        \end{align}
We can conclude that $g(\cdot ,Y)$ is a class of dominant functions of $f(\cdot)$ defined by Eq.~\eqref{eq1}, that's to say, $f(X)\leq g(X,Y)$; the equality $f(X)=g(X,X)$ is satisfied if and only if $X=Y$.
\end{prop}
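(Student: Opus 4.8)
\emph{Proof sketch.}\quad The plan is to carry out the classical SMACOF majorization argument: split $f$ into three pieces, match two of them exactly with terms of $g$, and bound the remaining (concave, non-smooth) piece by Cauchy--Schwarz. First I would expand the square in Eq.~\eqref{eq1} to obtain
\[
f(X)=\sum_{i<j} w_{ij}\vecnorm{x_i-x_j}^2 \;-\; 2\sum_{i<j} w_{ij}d_{ij}\vecnorm{x_i-x_j} \;+\; C,
\]
where $C=\sum_{i<j}w_{ij}d_{ij}^2$ is exactly the stated constant. Thus everything reduces to comparing the first sum with $\tr(\mattran{X}L^WX)$ and the middle sum with $\tr(\mattran{X}L^YY)$.

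Second, I would record the elementary quadratic-form identity for a generalized Laplacian: if $L=(l_{ij})$ satisfies $l_{ij}=-a_{ij}$ for $i\neq j$ with $a_{ij}=a_{ji}\ge 0$ and $l_{ii}=-\sum_{k\neq i}l_{ik}$, then for any $A,B\in\rvecspace{n\times d}$ with rows $a_i,b_i$ one has $\tr(\mattran{A}LB)=\sum_{i<j}a_{ij}\,\langle a_i-a_j,\;b_i-b_j\rangle$ (a short computation expanding the trace over coordinates and regrouping by unordered pairs). Taking $a_{ij}=w_{ij}$ gives $\tr(\mattran{X}L^WX)=\sum_{i<j}w_{ij}\vecnorm{x_i-x_j}^2$, which identifies the first sum above verbatim; note also that $L^W$ is a genuine positive semidefinite Laplacian because all $w_{ij}\ge 0$.

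Third --- and this is the heart of the matter --- I would bound $\tr(\mattran{X}L^YY)$ from above. Applying the same identity with $a_{ij}=w_{ij}d_{ij}/\vecnorm{y_i-y_j}$ when $y_i\neq y_j$ and $a_{ij}=0$ when $y_i=y_j$ (exactly matching the piecewise definition of $L^Y$) yields
\[
\tr(\mattran{X}L^YY)=\sum_{\substack{i<j\\ y_i\neq y_j}}\frac{w_{ij}d_{ij}}{\vecnorm{y_i-y_j}}\,\langle x_i-x_j,\;y_i-y_j\rangle .
\]
By the Cauchy--Schwarz inequality each inner product is at most $\vecnorm{x_i-x_j}\,\vecnorm{y_i-y_j}$, so every summand is bounded by $w_{ij}d_{ij}\vecnorm{x_i-x_j}$; re-including the omitted pairs with $y_i=y_j$ only adds nonnegative quantities, hence $\tr(\mattran{X}L^YY)\le\sum_{i<j}w_{ij}d_{ij}\vecnorm{x_i-x_j}$. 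Multiplying by $-2$ (which reverses the inequality) and adding $\tr(\mattran{X}L^WX)+C$ to both sides gives precisely $g(X,Y)=\tr(\mattran{X}L^WX)-2\tr(\mattran{X}L^YY)+C\ge f(X)$.

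Finally, for the equality/touching claim I would set $Y=X$: pairs with $x_i=x_j$ contribute $0$ on both sides, and for every remaining pair Cauchy--Schwarz is an equality since the two vectors coincide, so the whole chain collapses and $g(X,X)=f(X)$ --- the tightness condition that makes $g(\cdot,X)$ a usable surrogate for $f$ at $X$ in the majorization iteration. I do not expect any conceptual obstacle; the only thing demanding care is the bookkeeping in the generalized-Laplacian identity together with the $y_i=y_j$ degeneracy, so that the piecewise definition of $L^Y$ is honored and no division by zero is introduced.
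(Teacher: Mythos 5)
Your proposal is correct and follows exactly the route the paper indicates: expand the square in Eq.~\eqref{eq1}, identify the quadratic and constant terms with $\tr(\mattran{X}L^WX)$ and $C$ via the Laplacian quadratic-form identity, and bound the cross term $\tr(\mattran{X}L^YY)$ by Cauchy--Schwarz, with equality at $Y=X$. The paper's own proof is only a one-line pointer to Gansner \emph{et al.} citing this same expansion-plus-Cauchy--Schwarz argument, so you have simply supplied the details it omits.
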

\begin{proof}
        By expanding the stress function in Eq.~\eqref{eq1} and using Cauchy-Schwartz inequality, it is easy to get the results. The readers can refer to Ref.~\cite{Gansner:2004:GDS} for details of the proof.
\end{proof}

\begin{prop}\label{prop:2-2}
        For any given initial placement $X^{(0)}$, define a sequence of placements as
        \begin{gather}\label{eq3}
        \begin{split}
                 Y^{(k+1)} &= \argmin_Y g(X^{(k)},Y)    \\
                 X^{(k+1)} &= \argmin_X g(X,Y^{(k+1)})
        \end{split}
        \end{gather}
        The object stress function defined in Eq.~\eqref{eq1} is non-decreasing in the sequence of placements $\left\{ {X^{(k)}} \right\}$.
\end{prop}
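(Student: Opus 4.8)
The goal is to show that along the sequence $\{X^{(k)}\}$ generated by \eqref{eq3} the stress $f$ of \eqref{eq1} varies monotonically, which is the assertion of Proposition~\ref{prop:2-2}. The plan is to first simplify the two updates in \eqref{eq3} through the majorization structure of Proposition~\ref{prop:2-1}, and then compare consecutive stress values by a short chain of inequalities. For the inner step $Y^{(k+1)} = \argmin_Y g(X^{(k)},Y)$ I would exploit that $g(\cdot,Y)$ dominates $f$: fixing the first argument at $X^{(k)}$ gives $g(X^{(k)},Y) \ge f(X^{(k)})$ for every $Y$, with equality precisely at $Y = X^{(k)}$. Hence the inner minimizer is $Y^{(k+1)} = X^{(k)}$ and the attained value is $g(X^{(k)},Y^{(k+1)}) = f(X^{(k)})$, which collapses \eqref{eq3} into the single majorization update $X^{(k+1)} = \argmin_X g(X,X^{(k)})$.

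Next I would link $f(X^{(k)})$ and $f(X^{(k+1)})$ by concatenating three facts. The dominant-function bound of Proposition~\ref{prop:2-1} at the new iterate gives $f(X^{(k+1)}) \le g(X^{(k+1)},Y^{(k+1)})$; the optimality of the outer step, $X^{(k+1)} = \argmin_X g(X,Y^{(k+1)})$, gives $g(X^{(k+1)},Y^{(k+1)}) \le g(X^{(k)},Y^{(k+1)})$ since $X^{(k)}$ is an admissible competitor; and the tightness established above gives $g(X^{(k)},Y^{(k+1)}) = f(X^{(k)})$. Reading this chain end to end fixes the ordering of the two consecutive stress values, and because $k$ is arbitrary the same comparison propagates along the whole sequence $\{X^{(k)}\}$, which is exactly the monotone behaviour recorded in Proposition~\ref{prop:2-2}.

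The main obstacle is not the concatenation, which is purely formal, but the two well-definedness issues the chain takes for granted. First, I must argue carefully that the inner minimization really returns $Y^{(k+1)} = X^{(k)}$ and not a spurious competitor; this rests on the strict equality clause of Proposition~\ref{prop:2-1} (the Cauchy--Schwarz bound is tight only when each displacement $y_i - y_j$ is aligned with $x_i^{(k)} - x_j^{(k)}$), and it needs care at configurations with coincident points, where $L^Y$ is given by its degenerate branch. Second, the outer $\argmin$ must actually be attained: $g(\cdot,Y^{(k+1)})$ is a convex quadratic whose stationarity condition is the linear system $L^W X = L^{Y^{(k+1)}} Y^{(k+1)}$, and since the weighted Laplacian $L^W$ is only positive semidefinite (the all-ones vector lies in its kernel) I would establish attainment modulo the translational null space, for instance by fixing the centroid, so that $X^{(k+1)}$ is well defined. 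Once these two points are secured the inequality chain closes and the monotone behaviour of $f$ along $\{X^{(k)}\}$ asserted in Proposition~\ref{prop:2-2} follows.
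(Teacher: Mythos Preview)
Your proposal is correct and follows essentially the same route as the paper: identify $Y^{(k+1)}=X^{(k)}$ from the tightness of the majorization, then chain $f(X^{(k+1)})\le g(X^{(k+1)},X^{(k)})\le g(X^{(k)},X^{(k)})=f(X^{(k)})$. The paper's proof is terser and does not pause on the well-definedness issues you raise (attainment of the outer $\argmin$ modulo the $L^W$ null space, and uniqueness of the inner minimizer at degenerate configurations), but your added care there is sound and does not change the argument.
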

\begin{proof}
        From Proposition \ref{prop:2-1} we have
        \begin{align*}
                f(X) &\le g(X,Y), \quad \forall X,Y \\
                f(X) &=   g(X,X), \quad \forall X
        \end{align*}
        Because of the uniqueness of minimum, we have $Y^{(k+1)}=X^{(k)}$ from Eq.~\eqref{eq3}, so
        \begin{align}\label{eq4}
                X^{(k+1)} &=\argmin_X g(X,X^{(k)})
        \end{align}
        and the following chain inequalities hold:
        \begin{align*}
                f(X^{(k+1)})\le g(X^{(k+1)}, X^{(k)}) \le g(X^{(k)},X^{(k)})=f(X^{(k)}) 
        \end{align*}
\end{proof}
The iterative update process in Eq.~\eqref{eq3} (or, equivalently, Eq.~\eqref{eq4}) is the main step of force-directed majorization. The placement $X^{(k+1)}$ in iteration $k+1$ can be derived by searching for the minimum value of dominant function $g(\cdot,X^{(k)})$. The key point of this method is how to solve Eq.~\eqref{eq4} efficiently, and this can be done by the following algorithm.

\begin{prop}\label{prop:2-3}
        The solution to Eq.~\eqref{eq4} can be derived by solving the $d$ linear algebra equations below
        \begin{align}\label{eq:2-5}
                L^W\left[ {X^{(k+1)}} \right]_j = L^{X^{(k)}}\left[ {X^{(k)}} \right]_j, \quad j=1,2,\ldots ,d
        \end{align}
        where subscript $j$ denotes the vector composed of the $j$-th column of correspondent matrix.
\end{prop}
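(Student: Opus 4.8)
The plan is to treat $g(\cdot,X^{(k)})$ in Eq.~\eqref{eq4} as a quadratic function of the matrix variable $X\in\rvecspace{n\times d}$ and to locate its minimizer via the first-order optimality condition. Concretely, we must minimize $h(X):=\tr(\mattran{X}L^WX)-2\,\tr(\mattran{X}L^{X^{(k)}}X^{(k)})+C$ over all $X$, where $C$ is the constant from Eq.~\eqref{eq2} and $L^{X^{(k)}}X^{(k)}$ is a fixed $n\times d$ matrix.

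First I would decouple the problem across the $d$ coordinate columns. Using the identity $\tr(\mattran{A}B)=\sum_{j=1}^{d}\mattran{[A]_j}[B]_j$, one gets $\tr(\mattran{X}L^WX)=\sum_{j}\mattran{[X]_j}L^W[X]_j$ and $\tr(\mattran{X}L^{X^{(k)}}X^{(k)})=\sum_{j}\mattran{[X]_j}\big(L^{X^{(k)}}[X^{(k)}]_j\big)$, so that $h(X)=C+\sum_{j=1}^{d}\phi_j([X]_j)$ with $\phi_j(z)=\mattran{z}L^Wz-2\,\mattran{z}\big(L^{X^{(k)}}[X^{(k)}]_j\big)$. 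It therefore suffices to minimize each $\phi_j$ over $z\in\rvecspace{n}$ independently.

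Next I would compute $\nabla\phi_j(z)=2L^Wz-2\,L^{X^{(k)}}[X^{(k)}]_j$, exploiting the symmetry of the weighted Laplacian $L^W$, and set it to zero; this is precisely the linear system~\eqref{eq:2-5} for the $j$-th column. Because $L^W$ is a graph Laplacian, it is symmetric positive semidefinite, hence each $\phi_j$ is convex and every stationary point is a global minimizer, so any solution of \eqref{eq:2-5} does solve Eq.~\eqref{eq4}, which is the claim.

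The one subtlety, and the main obstacle, is that $L^W$ is singular: the all-ones vector $\vecone$ lies in its kernel, mirroring the translational invariance of the stress \eqref{eq1}. I would handle this by checking solvability, i.e.\ that each right-hand side lies in $\mathrm{range}(L^W)=(\ker L^W)^{\perp}$; this holds because $L^{X^{(k)}}$ is itself a Laplacian, so $\mattran{\vecone}L^{X^{(k)}}=\mattran{\veczero}$ and thus $\mattran{\vecone}\big(L^{X^{(k)}}[X^{(k)}]_j\big)=0$. The minimizer is then unique only up to adding a common constant to the $j$-th coordinate of all vertices, exactly the expected freedom in a layout, and is pinned down in practice by fixing one vertex or by taking the Moore--Penrose pseudo-inverse of $L^W$. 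Apart from this null-space bookkeeping the proof is a routine gradient computation.
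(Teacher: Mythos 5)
Your proposal is correct and follows essentially the same route as the paper: both treat $g(\cdot,X^{(k)})$ as a quadratic in $X$ and obtain \eqref{eq:2-5} as the first-order stationarity condition, with your column-wise decomposition of the traces simply making explicit why the system splits into $d$ independent equations. In fact you are more careful than the paper's own proof on one point: the paper asserts a \emph{unique} minimum, whereas $L^W$ is singular with null space $\spspan\{\vecone\}$, so the minimizer is unique only up to a per-coordinate translation; your solvability check via $\mattran{\vecone}L^{X^{(k)}}=\mattran{\veczero}$ and the remark about pinning down a vertex correctly supply the bookkeeping that the paper defers to the discussion following the proposition.
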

\begin{proof}
        As $g(X,Y)$ is a quadratic form with respect to $X$, it has a unique minimum on its domain, moreover at the minimum point, we have
        \begin{align}\label{eq5}
                L^W X &= L^Y Y
        \end{align}
        showing that the solution of Eq.~\eqref{eq4} can be got by solving Eq.~\eqref{eq:2-5}, and this concludes the claim.
\end{proof}

In Eq.~\eqref{eq5}, as the coefficient matrix, $L^W$ is a Laplacian of weight matrix $W=\left( {w_{ij} } \right)_{n\times n} $ and thus it is a positive semidefinite matrix with rank $n-1$. Its null space is $\spspan\{a\cdot \vecone\}$ ($\vecone$ denotes a vector with all 1 components).
Eq.~\eqref{eq5} cannot be solved directly for the singularity of $L^W$, and in classical MDS the Moore-Penrose inverse matrix is used to denote the solution by $X=\left[ {L^W} \right]^+L^YY$. Computing the solution by this method will make the computations too expensive to apply.
Gansner \emph{et al} \cite{Gansner:2004:GDS} presented another method to overcome this problem. By always taking $x_1 =\veczero$ and removing the first row and first column of $L^W$, as well as the first row of vector $L^Y Y$, we get a new $(n-1) \times (n-1)$ linear algebra system. The matrix of new system is strictly diagonal dominant and hence positive definite. Some direct methods (such as Cholesky factorization) or iterated methods (such as conjugate gradient, Gauss-Seidel iteration, etc) can be used here efficiently. We outline the process of force-directed majorization in Algorithm 1.

\begin{center}
\tiny
\begin{tabular}{rl}
\hline
	\multicolumn{2}{c}{Algorithm~1: Force-Directed Majorization for Graph Layout}	\\
\hline
1	&	initialize the placement $X^{(0)}$;	\\
2	&	$k \gets 0$;		\\
3	&	\textbf{Repeat};		\\
4	&	\qquad	$X^{(k+1)} \gets H\left( {X^{(k)}} \right)$; ( by solving the Eq.~\eqref{eq4} )\\
	&	\qquad	\\
	&	\qquad	\\
	&	\qquad	\\
	&	\qquad	\\
5	&	\qquad	$k\gets k+1;$		\\
6	&	\textbf{until} some termination conditions were satisfied;	\\
\hline
\end{tabular}
\qquad
\begin{tabular}{rl}
\hline
	\multicolumn{2}{c}{Algorithm~2: Succesive Over-Relaxation Algorithm for Graph Layout}	\\
\hline
1	&	initialize the placement $X^{(0)}$;	\\
2	&	$k \gets 0$;		\\
3	&	\textbf{Repeat};		\\
4	&	\qquad	$X^{(k+1)} \gets H\left( {X^{(k)}} \right)$; ( by solving the Eq.~\eqref{eq4} )\\
5	&	\qquad	$\tilde {X}^{(k+1)} \gets (1+\omega )X^{(k+1)}-\omega X^{(k)}$;	\\
6	&	\qquad	\textbf{if} $f\left( {\tilde {X}^{(k+1)}} \right) \le f\left( {X^{(k+1)}} \right)$ \textbf{then}	\\
7	&	\qquad	\qquad	$X^{(k+1)} \gets \tilde {X}^{(k+1)}$;	\\
8	&	\qquad	\textbf{end if}			\\
9	&	\qquad	$k\gets k+1;$		\\
10	&	\textbf{until} some termination conditions were satisfied;	\\
\hline
\end{tabular}
\end{center}

\section{Convergence Condition for Force-directed Layout}
\label{sec:fds-3}

Although force-directed layout is widely applied, there are still some problems awaiting to be solved in the basic theory. For example, to ensure the convergence of iteration what are the conditions to be satisfied by the object function? And how to estimate the convergence rate?  Guo \cite{Guoanxue:2003:BDD:CHN} and Zhang \cite{Zhanghongzhi:2002:DDHS:CHN} have proved the convergence of iteration in some classes of object functions, independently. In their cases, the object function should be a second-order differentiable contraction map. We will present a new sufficient condition for the convergence and give an estimate for the rate of convergence in this section. 

For simplicity of formula and without loss of generality, we assume $d = 1$ and replace the letters in uppercase $X,Y\in\rvecspace{n\times d}$ (matrix) with the ones in lowercase $x,y\in \rvecspace{n}$ (vector).

\begin{dfn}\label{dfn:3-1}
        Given a function $H:\rvecspace{n}\to \rvecspace{n}$ and the initial point $x^{(0)}\in \rvecspace{n}$, a sequence of points $\left\{ {x^{(k)}} \right\}$ defined by
        \begin{align}\label{eq6}
                x^{(k+1)} &= H(x^{(k)})
        \end{align}
        is called iterated sequence of $H$, and $H$ is called iteration function.
\end{dfn}
\begin{dfn}\label{dfn:3-2}
        For given function $H:\rvecspace{n}\to \rvecspace{n}$, we call $x^\ast $ a point of attraction of iteration \eqref{eq6} if there is an open neighborhood $S(x^\ast ,\varepsilon )=\left\{ {x: \left\| {x-x^\ast } \right\|<\varepsilon } \right\}$ of $x^\ast $ such that, $\forall x^{(0)}\in S(x^\ast ,\varepsilon )$, the iterated sequence in \eqref{eq6} converges to $x^\ast $.
\end{dfn}
\begin{dfn}\label{dfn:3-3}
        Let $H:\rvecspace{n}\to \rvecspace{n}$. $x\in \rvecspace{n}$ be a fixed point of $H$ if $H(x)=x$.
\end{dfn}
\begin{thm}\label{thm:3-1}
        Suppose that $x^\ast $ is a fixed point of function $H:\rvecspace{n}\to \rvecspace{n}$ and $H$ is Frenchet-differentiable at $x^\ast$. Let $D$ denote the differential operator. If the spectral radius $\rho (D H(x^\ast )) < 1$, then 
        (i) $x^\ast $ is a point of attraction of iterated sequence \eqref{eq6}, and 
        (ii) the convergence rate of the sequence is $\rho (D H(x^\ast ))$. Furthermore the convergence is linear if $\rho (D H(x^\ast ))>0$.
\end{thm}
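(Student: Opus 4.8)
The plan is to follow the classical Ostrowski-type argument for local convergence of a fixed-point iteration, the central device being an \emph{adapted norm} in which $D H(x^\ast)$ becomes a contraction. First I would invoke the standard matrix-analysis lemma: for the operator $A := D H(x^\ast)$ and any prescribed $\varepsilon>0$, there exists a vector norm on $\rvecspace{n}$ whose induced operator norm $\vecnorm{\cdot}_\varepsilon$ satisfies $\vecnorm{A}_\varepsilon \le \rho(A)+\varepsilon$. Since $\rho(A)<1$, fix $\varepsilon$ small enough that $\beta := \rho(A)+2\varepsilon<1$. Because all norms on $\rvecspace{n}$ are equivalent, working throughout with $\vecnorm{\cdot}_\varepsilon$ is harmless: a $\vecnorm{\cdot}_\varepsilon$-ball contains and is contained in ordinary $\varepsilon$-balls, so the final ``point of attraction'' conclusion transfers back to the norm used in Definition~\ref{dfn:3-2}.

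Next I would exploit Fréchet-differentiability at the fixed point. Since $H(x^\ast)=x^\ast$, differentiability gives the expansion $H(x)-x^\ast = A(x-x^\ast)+r(x)$ with $\vecnorm{r(x)}_\varepsilon/\vecnorm{x-x^\ast}_\varepsilon \to 0$ as $x\to x^\ast$; hence there is $\delta>0$ with $\vecnorm{r(x)}_\varepsilon \le \varepsilon\,\vecnorm{x-x^\ast}_\varepsilon$ whenever $\vecnorm{x-x^\ast}_\varepsilon<\delta$. Combining the two estimates, for such $x$,
\[
   \vecnorm{H(x)-x^\ast}_\varepsilon \le \vecnorm{A}_\varepsilon\,\vecnorm{x-x^\ast}_\varepsilon + \vecnorm{r(x)}_\varepsilon \le \big(\rho(A)+2\varepsilon\big)\vecnorm{x-x^\ast}_\varepsilon = \beta\,\vecnorm{x-x^\ast}_\varepsilon .
\]
In particular $H$ maps $S(x^\ast,\delta)$ into itself, so any $x^{(0)}\in S(x^\ast,\delta)$ generates an iterated sequence that stays in the ball and obeys $\vecnorm{x^{(k+1)}-x^\ast}_\varepsilon \le \beta\,\vecnorm{x^{(k)}-x^\ast}_\varepsilon$; induction yields $\vecnorm{x^{(k)}-x^\ast}_\varepsilon \le \beta^{k}\vecnorm{x^{(0)}-x^\ast}_\varepsilon \to 0$. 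This proves (i), and simultaneously exhibits a per-step contraction factor $\beta<1$, i.e. (at least) linear convergence; when $\rho(A)>0$ this factor cannot be driven to $0$ by shrinking $\varepsilon$, which is the content of the ``furthermore'' clause.

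For (ii), the upper bound on the rate comes from taking $k$-th roots: $\limsup_{k}\vecnorm{x^{(k)}-x^\ast}^{1/k} \le \beta = \rho(A)+2\varepsilon$ for every admissible $\varepsilon$, hence $\limsup_{k}\vecnorm{x^{(k)}-x^\ast}^{1/k}\le \rho(A)$. To show the rate is \emph{exactly} $\rho(A)$ (rather than merely $\le$), I would choose the initial error $x^{(0)}-x^\ast$ to have a nonzero component in the invariant subspace of $A$ belonging to an eigenvalue of modulus $\rho(A)$; writing $e^{(k)}=x^{(k)}-x^\ast$, iterating $e^{(k)} = A e^{(k-1)} + r(x^{(k-1)})$, and using that the remainders are of strictly higher order while $\vecnorm{A^{k}v}^{1/k}\to\rho(A)$ for such $v$, one gets $\liminf_{k}\vecnorm{e^{(k)}}^{1/k}\ge \rho(A)$. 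I expect this lower-bound step to be the main obstacle: one must control the cumulative effect of the nonlinear remainders $r(x^{(k)})$ so that they do not erode the $\rho(A)$-growth of the linearized part, whereas the upper bound and the attraction statement are routine once the adapted norm is available.
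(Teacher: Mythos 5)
Your argument is correct in substance, but note that the paper itself gives no proof of Theorem~\ref{thm:3-1}: its ``proof'' consists of citations to Ostrowski's book for part (i) and to Ortega--Rheinboldt (Chapters 9--10) for part (ii). What you have written out is exactly the classical argument contained in those references, so the comparison here is between a citation and a self-contained proof rather than between two genuinely different proofs. Your treatment of (i) is complete: the adapted norm with $\vecnorm{DH(x^\ast)}_\varepsilon\le\rho(DH(x^\ast))+\varepsilon$, the first-order expansion at the fixed point, and the resulting contraction estimate $\vecnorm{H(x)-x^\ast}_\varepsilon\le\beta\vecnorm{x-x^\ast}_\varepsilon$ with $\beta<1$ on a small ball is precisely Ostrowski's proof, and your norm-equivalence remark correctly transfers the conclusion back to the norm of Definition~\ref{dfn:3-2}. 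For (ii), the upper bound $\limsup_k\vecnorm{x^{(k)}-x^\ast}^{1/k}\le\rho(DH(x^\ast))$ is immediate from the same estimate; the lower bound is the one place where genuine work remains, and you have identified it accurately. The way to close it is to observe that the remainders satisfy $\vecnorm{r(x^{(k)})}\le\epsilon_k\vecnorm{e^{(k)}}$ with $\epsilon_k\to 0$, so they can only multiply the dominant component of the error by factors $(1-c\,\epsilon_k)$ whose product has $k$-th root tending to $1$; this is the content of the Ortega--Rheinboldt linear-convergence theorem. Two small caveats: the equality of the rate with $\rho(DH(x^\ast))$ is a statement about the root-convergence factor of the iterative \emph{process} (a supremum over admissible initial points, which is why you are entitled to place $x^{(0)}-x^\ast$ in the dominant invariant subspace), not about every individual trajectory (e.g.\ $x^{(0)}=x^\ast$ converges instantly); and when the dominant eigenvalue is not semisimple the per-step lower bound on the projected error again needs the adapted norm. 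Neither caveat invalidates your plan.
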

\begin{proof}
    For (i), please refer to Ref.~\cite{ostrowski:1966:SES} or Ref.~\cite{Guoanxue:2003:BDD:CHN}(Theorem~5). (ii) can be found in Ref.~\cite{Ortega:2000:ISN}(Chapter~9,10).
\end{proof}

Theorem~\ref{thm:3-1} presents a sufficient condition for iteratively solving a non-linear system and it is required that iteration function   has some perfect character. In force-directed layout, following this conclusion, we introduce the ``A-condition''. below.

\noindent \textbf{(A-condition):}
 Suppose that $x^\ast $ is a fixed point of iteration function \eqref{eq4} and $f$ is second-order differentiable at $x^\ast $, while $g$ has second derivatives at $(x^\ast ,x^\ast )$. We also assume that the minimum of Eq.~\eqref{eq4} exists and is unique in each iteration.

\begin{thm}\label{thm:3-2}
        When ``A-condition'' is satisfied, force-directed layout algorithm has a convergence rate $1-\lambda _n (x^\ast )$ at $x^\ast $, where $\lambda _n (x^\ast )$ is the minimal eigenvalue of the generalized eigen-system 
        \begin{align}\label{eq7}
                D^2f(x^\ast )z &= \lambda D_{11} g(x^\ast ,x^\ast )z
        \end{align}
        where $D^2$ and $D_{11} $ denote second-order derivation operator and second-order partial derivative operator, respectively.
\end{thm}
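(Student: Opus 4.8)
The plan is to deduce the claim from Theorem~\ref{thm:3-1} by computing the derivative at $x^\ast$ of the majorization iteration map $H$ of Eq.~\eqref{eq4} and identifying $\rho\big(DH(x^\ast)\big)$ with $1-\lambda_n(x^\ast)$. A preliminary remark: $D_{11}g(x^\ast,x^\ast)$ is a positive multiple of the Laplacian $L^W$, hence only positive semidefinite with null space $\spspan\{\vecone\}$. Following the reduction recalled after Eq.~\eqref{eq5}, I would fix the gauge $x_1=0$, delete the first coordinate together with the corresponding row and column, and work on the resulting $(n-1)$-dimensional subspace, on which $D_{11}g(x^\ast,x^\ast)$ is symmetric positive definite and on which the generalized eigensystem~\eqref{eq7} is to be read; all inverses below live there.

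First I would write $H$ implicitly. By the ``A-condition'' the minimum in Eq.~\eqref{eq4} exists and is unique, and since $g(\cdot,y)$ is a quadratic form, $H(y)$ is the unique solution of the stationarity equation $D_1 g\big(H(y),y\big)=0$, where $D_1$ is the partial gradient in the first slot. The left-hand side is affine in its first argument with invertible linear part $D_{11}g$, and it is differentiable in $y$ near $x^\ast$ because the entries $l^Y_{ij}=-w_{ij}d_{ij}/\vecnorm{y_i-y_j}$ are smooth wherever the vertices occupy pairwise distinct positions, which is the case in a neighbourhood of $x^\ast$ under the ``A-condition''. The implicit function theorem then makes $H$ differentiable near $x^\ast$, and differentiating $D_1 g\big(H(y),y\big)=0$ at $y=x^\ast$ (using $H(x^\ast)=x^\ast$) gives
\[
    DH(x^\ast)=-\big[D_{11}g(x^\ast,x^\ast)\big]^{-1}D_{12}g(x^\ast,x^\ast).
\]

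Next I would remove $D_{12}g$ with the domination property of Proposition~\ref{prop:2-1}. For every fixed $y$ the function $x\mapsto g(x,y)-f(x)$ is nonnegative and vanishes exactly at $x=y$, so $x=y$ is its unique global minimizer; the vanishing of the $x$-gradient there gives the identity $D_1 g(y,y)=Df(y)$ for all $y$ near $x^\ast$, and differentiating it at $y=x^\ast$ (legitimate by the ``A-condition'') yields $D_{11}g(x^\ast,x^\ast)+D_{12}g(x^\ast,x^\ast)=D^2 f(x^\ast)$. Substituting into the previous display,
\[
    DH(x^\ast)=I-\big[D_{11}g(x^\ast,x^\ast)\big]^{-1}D^2 f(x^\ast).
\]
On the reduced subspace $D_{11}g(x^\ast,x^\ast)$ is positive definite, so $\big[D_{11}g(x^\ast,x^\ast)\big]^{-1}D^2 f(x^\ast)$ is similar to the symmetric matrix $\big[D_{11}g(x^\ast,x^\ast)\big]^{-1/2}D^2 f(x^\ast)\big[D_{11}g(x^\ast,x^\ast)\big]^{-1/2}$; its eigenvalues are real and coincide with the generalized eigenvalues $\lambda_1\ge\cdots\ge\lambda_n$ of Eq.~\eqref{eq7}, so the eigenvalues of $DH(x^\ast)$ are the numbers $1-\lambda_i$. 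Since $x=x^\ast$ minimizes $g(x,x^\ast)-f(x)$, its $x$-Hessian there is positive semidefinite, i.e. $D^2 f(x^\ast)\preceq D_{11}g(x^\ast,x^\ast)$, hence every $\lambda_i\le 1$ and every $1-\lambda_i\ge 0$; therefore
\[
    \rho\big(DH(x^\ast)\big)=\max_i(1-\lambda_i)=1-\min_i\lambda_i=1-\lambda_n(x^\ast).
\]
Provided $\lambda_n(x^\ast)>0$ (equivalently $D^2 f(x^\ast)$ positive definite on the reduced subspace, the natural non-degeneracy at an isolated minimizer, where also $D^2 f(x^\ast)\succeq 0$), this is strictly below $1$, so Theorem~\ref{thm:3-1} applies and gives both conclusions: $x^\ast$ is a point of attraction and the convergence rate equals $1-\lambda_n(x^\ast)$, which is linear when in addition $\lambda_n(x^\ast)<1$.

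I expect the main obstacle to be purely the regularity bookkeeping that the ``A-condition'' is designed to supply: that $H$ is genuinely differentiable at $x^\ast$ — this fails if two vertices coincide at $x^\ast$, where $l^Y_{ij}=-w_{ij}d_{ij}/\vecnorm{y_i-y_j}$ is not differentiable — that $D^2 f(x^\ast)$ and $D_{11}g(x^\ast,x^\ast),D_{12}g(x^\ast,x^\ast)$ all exist so the identity relating them may be differentiated, and that the singularity of $L^W$ is neutralized by passing to the gauge-fixed, positive-definite reduced system so that $\big[D_{11}g(x^\ast,x^\ast)\big]^{-1}$ and the eigensystem~\eqref{eq7} are meaningful. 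Once these are in place, the algebraic core — the identity $DH(x^\ast)=I-[D_{11}g(x^\ast,x^\ast)]^{-1}D^2 f(x^\ast)$ and the sandwich $0\preceq D^2 f(x^\ast)\preceq D_{11}g(x^\ast,x^\ast)$ — follows directly from Proposition~\ref{prop:2-1}.
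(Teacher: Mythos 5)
Your proof follows the same route as the paper's: implicit differentiation of the stationarity condition to get $H'(x^\ast)=-[D_{11}g(x^\ast,x^\ast)]^{-1}D_{12}g(x^\ast,x^\ast)$, the identity $D^2f=D_{11}g+D_{12}g$ to rewrite this as $I-[D_{11}g]^{-1}D^2f$, and Theorem~\ref{thm:3-1} to read off the rate $1-\lambda_n(x^\ast)$. You additionally justify several steps the paper leaves implicit --- the gauge-fixing that makes $[D_{11}g]^{-1}$ and the eigensystem meaningful despite the singularity of $L^W$, the derivation of $D^2f=D_{11}g+D_{12}g$ from the first-order condition of the majorization (a naive chain rule on $f(x)=g(x,x)$ would produce extra terms), and the sandwich $0\preceq D^2f(x^\ast)\preceq D_{11}g(x^\ast,x^\ast)$ needed to equate the spectral radius with $\max_i(1-\lambda_i)=1-\lambda_n(x^\ast)$ --- so your argument is, if anything, more complete than the paper's.
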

\begin{proof}
        By Eq.~\eqref{eq4} an implicit iteration function $H: x^{(k+1)}=H\left( {x^{(k)}} \right)$ is defined, and under ``A-condition'' $x^\ast $ is a fixed point of $H$. From Eq.~\eqref{eq4} it follows that
        \begin{gather}
                D_1 g(x^{(k+1)},x^{(k)}) = \veczero     \nonumber \\
                {H}'(x) = -\left[ {D_{11} g(x,x)} \right]^{-1}D_{12} g(x,x)     \label{eq8}
        \end{gather}
        Notice that $f(x)=g(x,x)$. Then 
        \begin{align}\label{eq9}
                D^2f(x) &= D_{11} g(x,x)+D_{12} g(x,x)
        \end{align}
        Combining Eq.~\eqref{eq8} with \eqref{eq9}, we have
        \begin{align*}
                {H}'(x) &= I-[D_{11} g(x,x)]^{-1}D^2f(x)
        \end{align*}
        From Theorem~\ref{thm:3-1} we can conclude that the iteration in force-directed layout is convergent and the convergence rate is given by spectral radius $\rho ({H}'(x^\ast ))$ which is the maximal eigenvalue of matrix ${H}'(x^\ast )$, or equivalently, the minimal eigenvalue of generalized eigen-system \eqref{eq7}.
\end{proof}

In Theorem~\ref{thm:3-2} a sufficient condition, ``A-condition'', is given and the corresponding convergence rate is estimated. Compared with the algorithm in Refs.~\cite{Guoanxue:2003:BDD:CHN,Zhanghongzhi:2002:DDHS:CHN} the condition proposed here is not imposed on iteration function itself but on stress function and its dominant function, thus making it more fit for the implicit iteration function cases common in applications.

On the other hand ``A-condition'' make rigorous demands on the existence of fixed point, the differentiability of both stress function and its dominant function at the fixed point. This leads to many difficulties. Therefore, Theorem~\ref{thm:3-2} has more theoretical meaning than practical value. We will propose a new method in the Sect.~\ref{sec:fds-4} from the practical point of view to accelerate the iteration process in the force-directed layout.

\section{Accelerating Force-Directed Layout Based on Successive Over-Relaxation}
\label{sec:fds-4}

One of the difficulties of iterating method is that it is hard to pre-estimate the mounts of computations. The low rate of convergence and slow speed often make a loss of practicality even if the iteration is convergent. So we try to seek some accelerating methods. Correction and relaxation are two common techniques in numerical computation field \cite{Wangnengchao:2004:JSF:CHN}. In correction technique by making a small adjustment to current point $X^{(k)}$, a new corrected point $\tilde {X}^{(k)}$ is derived and it is closer to the true solution. In relaxation technique the new solution with higher approximate precision is derived by combining two old iterated solutions $\tilde {X}_1$ and $\tilde {X}_2$. We introduce the correction and relaxation techniques into the iteration relation $X^{(k+1)}=H\left( {X^{(k)}} \right)$ mentioned in Sect.~\ref{sec:fds-2}, and combine the current solution with the one in the next iteration into
\begin{align}\label{eq10}
        \tilde{X}^{(k+1)} &= (1+\omega )X^{(k+1)}-\omega X^{(k)}
\end{align}
where $\omega \ge 0$ is the relaxation factor. The intuitional meaning of Eq.~\eqref{eq10} can be explained as follows. Although $X^{(k+1)}$ and $X^{(k)}$ are both an approximation to the true solution $X^\ast $, the former is better than the latter for $f\left( {X^{(k+1)}} \right)\le f\left( {X^{(k)}} \right)$. In Eq.~\eqref{eq10} better $X^{(k+1)}$ has been strengthened while the worse $X^{(k)}$ has been suppressed in order to get an approximation better than these two. We call this method successive over-relaxation (SOR) and list the process as Algorithm~2. Apparently the origin algorithm is a special case of successive over-relaxation method when taking $\omega=0$. 

In Algorithm~2 we add an SOR accelerating step in lines 5--8 for comparison with Algorithm~1. In term of algorithm complexity, these new-added steps combine two existing solutions and the computation time can be ignored compared with solving linear system (line 4). This can also be confirmed in the numerical experiments in Sect.~\ref{sec:fds-5}. Lines 6--8 guarantee that combined solution cannot be deteriorated.

The SOR accelerating has another intuitional explanation. In classical force-directed layout, some branches of graph usually tend to move toward a specific direction in the layout plane across several iterations. This movement may be one among translation, rotation and reflection or their combinations (see supplementary materials available at \url{http://wang.yongxian.googlepages.com/graphdraw}). This phenomenon suggests that each vertex’s movement in successive iterations may not reach its ``saturation length” and thus has some potential to move a farther stride. Based on this observation we predict the next position of some vertices as
\begin{align}\label{eq11}
        \tilde {X}^{(k+1)} &= X^{(k+1)}+\omega \left( {X^{(k+1)}-X^{(k)}} \right)
\end{align}
This is just the equivalent form of Eq.~\eqref{eq10}. The parameter of stride factor $\omega $ is used to control the predicted position of the given vertex which locates on the extension line across both recent points $X^{(k+1)}$ and $X^{(k)}$. When $\omega=0$, SOR algorithm will degenerate into the original non-SOR algorithm.

\section{Experiments and Discussions}
\label{sec:fds-5}

\subsection{Data Sets}

To evaluate the performance of SOR acceleration, we choose some benchmark data commonly used in graph drawing as the data sets which cover all kinds of cases, such as a variety of vertex numbers, different degree distributions of vertices, weighted edges and unweighted edges, and so on. Two extra net examples come from real world (railway net of China and protein-protein interaction network of \emph{Cerevisiae Fermentum}) are also tested. Because the running time may be affected by the initial placement, we choose the initial layout randomly on a unit square in the plane and take every running time reported in this section as the arithmetic average of 20 runs. We take $w_{ij} = d_{ij}^{-2}$ in Eq.~\eqref{eq1}. The experiment platform is a PC with Pentium 4 CPU, 3.0GHz, 1.5GB main memory running Microsoft Windows XP (SP2) OS and the program written in Matlab script is used to implement all algorithms (programs and data set are available in supplementary material).

\subsection{How to Select Relax Factors}

In Algortihm 2, different relax factors will have much impact to the performance of algorithm. We adopt three strategies to evaluate the degree of this impact.

\subsubsection{Fixed Strategy}

In fixed strategy we determine the value of relax factor before iteration in Algorithm 2 and never change its value during all iterations. By choosing a variety of values for relax factor, we evaluate the relax factor's impact to the performance and the results on the data set \texttt{grid-1158} (including 1158 vertices) are reported in Fig.~\ref{fig:fixed-relax-factor}. The curves of iteration vs. stress in each of 80 iterations plotted in Fig.~\ref{fig:fixed-relax-factor}(a) and \ref{fig:fixed-relax-factor}(b) indicate that whatever relax factor, stress function decreases in SOR version faster than in non-SOR version. To make clear the relation between convergence rate and the degree of relaxation, the maximum number of iterations for reaching some given stress value in a variety of relaxation levels are examined in Fig.~\ref{fig:fixed-relax-factor}(c). Fig.~\ref{fig:fixed-relax-factor}(c). showed that SOR version needs less iterations than non-SOR version to reach the same stress value. More precisely, SOR method is only 40--60 percent or so of non-SOR method in running time. The bigger the relax factor's value taken, the more significant the difference is. However in the case of large relax factor, the ``invalid relaxation'' frequently occurs, suggesting that the condition in line 6 of Algorithm 2 is not satisfied. As a result the speedup of whole optimization process declines as shown in Fig.~\ref{fig:fixed-relax-factor}(d). The reason lies in the fact that only a few of relaxations are required in all iterations indeed in accelerating the optimization dramatically, and a large stride (i.e. relax factor)  farther than expected will lead to a sequence of conservative strides in successive iterations, thereby resulting in a decline in the total speedup. This explanation can also be confirmed in following tests.

\begin{figure}[h]
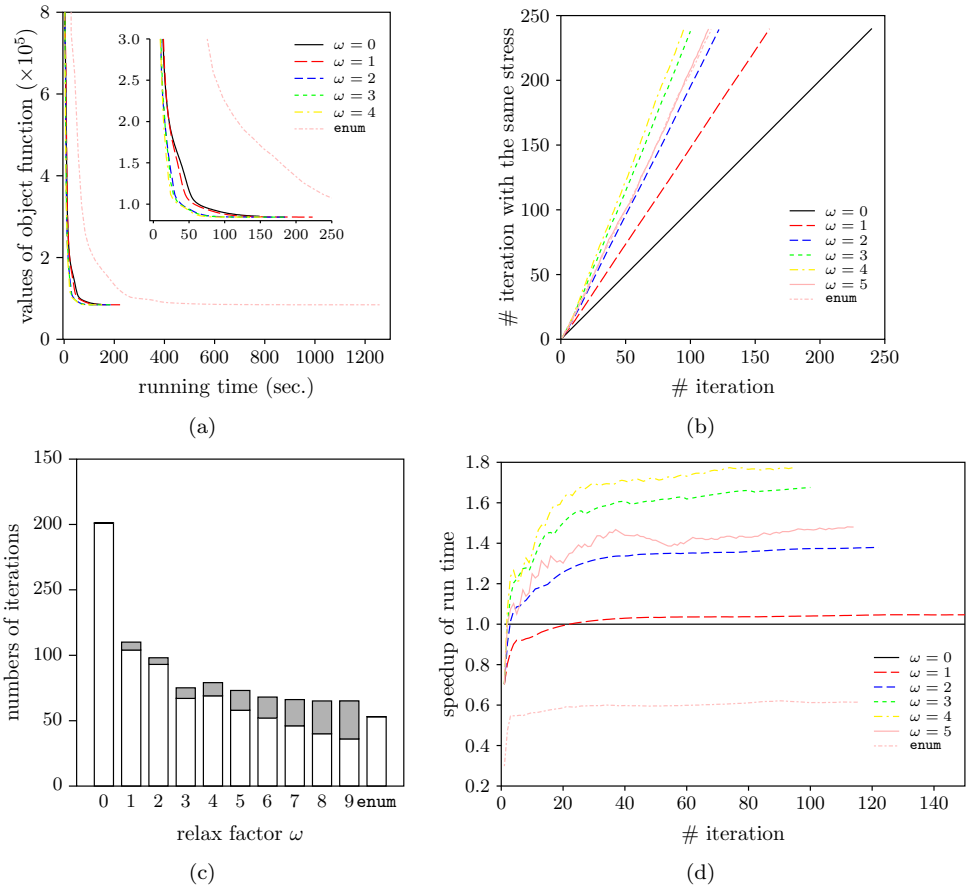

\centering
\begin{tabular}{cc}
        \scalebox{0.8}{\input{fixed-time-stress.pgf}}
&
        \scalebox{0.8}{\input{fig-iter-n-vs-1.pgf}}
\\
\footnotesize (a) &
\footnotesize (b) \\
        \scalebox{0.8}{\begin{tikzpicture}[x=0.847mm, y=1.131mm, inner xsep=0pt, inner ysep=0pt, outer xsep=0pt, outer ysep=0pt]

\path[xytics] (23.39,-12.56) -- (87.58,-12.56) -- (87.58,35.52) -- (23.39,35.52) -- cycle;

\node[outer ysep=6mm, anchor=north] at (55.485,-12.56) {relax factor $\omega$\strut};
\node[rotate=90] at (14-3,11.48) {numbers of iterations\strut};

\begin{scope}
\fontsize{9}{10}\selectfont
	\path[xytics] (28.73,-12.56) node[xtics label]{0\strut};
	\path[xytics] (34.10,-12.56) node[xtics label]{1\strut};
	\path[xytics] (39.44,-12.56) node[xtics label]{2\strut};
	\path[xytics] (44.78,-12.56) node[xtics label]{3\strut};
	\path[xytics] (50.13,-12.56) node[xtics label]{4\strut};
	\path[xytics] (55.49,-12.56) node[xtics label]{5\strut};
	\path[xytics] (60.84,-12.56) node[xtics label]{6\strut};
	\path[xytics] (66.18,-12.56) node[xtics label]{7\strut};
	\path[xytics] (71.53,-12.56) node[xtics label]{8\strut};
	\path[xytics] (76.89,-12.56) node[xtics label]{9\strut};
	\path[xytics] (82.23,-12.56) node[xtics label]{\texttt{enum}\strut};
	\path[xytics] (21.87,-12.56) node[ytics label]{  0\strut} -- (23.39,-12.56);
	\path[xytics] (21.87, -2.95) node[ytics label]{ 50\strut} -- (23.39,-2.95);
	\path[xytics] (21.87,  6.67) node[ytics label]{100\strut} -- (23.39,6.67);
	\path[xytics] (21.87, 16.30) node[ytics label]{250\strut} -- (23.39,16.30);
	\path[xytics] (21.87, 25.91) node[ytics label]{200\strut} -- (23.39,25.91);
	\path[xytics] (21.87, 35.52) node[ytics label]{150\strut} -- (23.39,35.52);
\end{scope}

\begin{scope}
	\path[bars] (26.86,-12.56) -- (30.62,-12.56) -- (30.62,26.12) -- (26.86,26.12) -- cycle;
	\path[filledbars] (26.86,26.11) -- (30.62,26.11) -- (30.62,26.12) -- (26.86,26.12) -- cycle;
	\path[bars] (32.21,-12.56) -- (35.98,-12.56) -- (35.98,7.46) -- (32.21,7.46) -- cycle;
	\path[filledbars] (32.21,7.44) -- (35.98,7.44) -- (35.98,8.61) -- (32.21,8.61) -- cycle;
	\path[bars] (37.57,-12.56) -- (41.33,-12.56) -- (41.33,5.34) -- (37.57,5.34) -- cycle;
	\path[filledbars] (37.57,5.33) -- (41.33,5.33) -- (41.33,6.31) -- (37.57,6.31) -- cycle;
	\path[bars] (42.92,-12.56) -- (46.67,-12.56) -- (46.67,0.35) -- (42.92,0.35) -- cycle;
	\path[filledbars] (42.92,0.33) -- (46.67,0.33) -- (46.67,1.89) -- (42.92,1.89) -- cycle;
	\path[bars] (48.26,-12.56) -- (52.02,-12.56) -- (52.02,0.72) -- (48.26,0.72) -- cycle;
	\path[filledbars] (48.26,0.71) -- (52.02,0.71) -- (52.02,2.65) -- (48.26,2.65) -- cycle;
	\path[bars] (53.60,-12.56) -- (57.38,-12.56) -- (57.38,-1.39) -- (53.60,-1.39) -- cycle;
	\path[filledbars] (53.60,-1.41) -- (57.38,-1.41) -- (57.38,1.50) -- (53.60,1.50) -- cycle;
	\path[bars] (58.97,-12.56) -- (62.72,-12.56) -- (62.72,-2.54) -- (58.97,-2.54) -- cycle;
	\path[filledbars] (58.97,-2.56) -- (62.72,-2.56) -- (62.72,0.53) -- (58.97,0.53) -- cycle;
	\path[bars] (64.31,-12.56) -- (68.07,-12.56) -- (68.07,-3.69) -- (64.31,-3.69) -- cycle;
	\path[filledbars] (64.31,-3.70) -- (68.07,-3.70) -- (68.07,0.16) -- (64.31,0.16) -- cycle;
	\path[bars] (69.66,-12.56) -- (73.41,-12.56) -- (73.41,-4.85) -- (69.66,-4.85) -- cycle;
	\path[filledbars] (69.66,-4.87) -- (73.41,-4.87) -- (73.41,-0.04) -- (69.66,-0.04) -- cycle;
	\path[bars] (75.00,-12.56) -- (78.77,-12.56) -- (78.77,-5.61) -- (75.00,-5.61) -- cycle;
	\path[filledbars] (75.00,-5.63) -- (78.77,-5.63) -- (78.77,-0.04) -- (75.00,-0.04) -- cycle;
	\path[bars] (80.36,-12.56) -- (84.12,-12.56) -- (84.12,-2.35) -- (80.36,-2.35) -- cycle;
	\path[filledbars] (80.36,-2.36) -- (84.12,-2.36) -- (84.12,-2.35) -- (80.36,-2.35) -- cycle;
\end{scope}

\end{tikzpicture}
&
        \scalebox{0.8}{\input{fig-timespeedup-n-vs-1.pgf}}
\\
\footnotesize (c) &
\footnotesize (d)
\end{tabular}
\caption{Speedup of SOR algorithm}
\label{fig:fixed-relax-factor}
\end{figure}

\subsubsection{Enumerating Strategy}

In enumerating strategy we seek the best relax factor in $k$-th iteration by using an exhausting search method, i.e.:
\[\omega_\ast^{(k+1)} = \argmin_{\omega\in[0,+\infty)} f\left((1+\omega )X^{(k+1)}-\omega X^{(k)}\right).\]
For simplicity we search for the optimal relax factor in the discrete candidate set $\{0, 0.5, 1, 1.5, \dots, 8.5, 9\}$ in the tests. The results (Fig.~\ref{fig:fixed-relax-factor}(b) and \ref{fig:fixed-relax-factor}(c)) show that to reach the same stress value, the number of iteration in enumerating strategy should be further decreased to only 30--40 percents of that in non-SOR version. On the other hand, the total running time becomes longer for checking every candidate value of relax factor in each iteration and this makes the enumerating strategy unpractical.
As an example the relax factors of all iterations for drawing the graph \texttt{band-516} are illustrated in Fig.~\ref{fig:dynamic-relax-factor}. From the example we have two observations: (i) the optimal relax factors in different iterations are mainly located in a narrow interval [0.5,2], and (ii) a zigzag curve appears in the sequence of optimal relax factors; that is, if a big relax factor is taken in current iteration, the one taken in the next iteration should be smaller. This zigzag phenomenon is also seen in the fixed strategy previously. There is no obvious relation between optimal relax factor and the iteration.

\begin{figure}[h]
\centering
\scalebox{0.6}{\begin{tikzpicture}[x=2mm, y=1mm, inner xsep=0pt, inner ysep=0pt, outer xsep=0pt, outer ysep=0pt]

\path[xytics] (23.39,-12.56) -- (127.56,-12.56) -- (127.56,46.57) -- (23.39,46.57) -- cycle;

\node[inner ysep=7mm, anchor=north]  at (75.475,-12.56) {iteration\strut};
\node[rotate=90]  at (23.39-9.5+3,17) {best relax factor\strut};

\begin{scope}
\fontsize{9}{10}\selectfont
	\path[xytics] ( 23.39,-12.56) -- +(0, -1.11) node[xtics label]{  0\strut};
	\path[xytics] ( 37.09,-12.56) -- +(0, -1.11) node[xtics label]{ 50\strut};
	\path[xytics] ( 50.80,-12.56) -- +(0, -1.11) node[xtics label]{100\strut};
	\path[xytics] ( 64.50,-12.56) -- +(0, -1.11) node[xtics label]{150\strut};
	\path[xytics] ( 78.21,-12.56) -- +(0, -1.11) node[xtics label]{200\strut};
	\path[xytics] ( 91.93,-12.56) -- +(0, -1.11) node[xtics label]{250\strut};
	\path[xytics] (105.64,-12.56) -- +(0, -1.11) node[xtics label]{300\strut};
	\path[xytics] (119.34,-12.56) -- +(0, -1.11) node[xtics label]{350\strut};

	\path[xytics] (23.39,-12.56) -- +(-1.11, 0) node[ytics label]{0  \strut};
	\path[xytics] (23.39, -6.65) -- +(-1.11, 0) node[ytics label]{0.5\strut};
	\path[xytics] (23.39, -0.74) -- +(-1.11, 0) node[ytics label]{1.0\strut};
	\path[xytics] (23.39,  5.19) -- +(-1.11, 0) node[ytics label]{1.5\strut};
	\path[xytics] (23.39, 11.09) -- +(-1.11, 0) node[ytics label]{2.0\strut};
	\path[xytics] (23.39, 17.00) -- +(-1.11, 0) node[ytics label]{2.5\strut};
	\path[xytics] (23.39, 22.91) -- +(-1.11, 0) node[ytics label]{3.0\strut};
	\path[xytics] (23.39, 28.82) -- +(-1.11, 0) node[ytics label]{3.5\strut};
	\path[xytics] (23.39, 34.75) -- +(-1.11, 0) node[ytics label]{4.0\strut};
	\path[xytics] (23.39, 40.66) -- +(-1.11, 0) node[ytics label]{4.5\strut};
	\path[xytics] (23.39, 46.57) -- +(-1.11, 0) node[ytics label]{5.0\strut};
\end{scope}

\begin{scope}

\path[xytics,line width=0.1bp] (23.67,46.57) -| (23.94,15.82) -| (24.48,15.82) -- (24.48,4.00) -| (24.76,-0.74)  -- (25.03,-0.74) -- (25.03,-7.83) -| (25.31,1.62) -| (25.58,-5.47) -| (25.86,6.37) -| (26.12,-7.83) -| (26.41,13.46) -| (26.69,-7.83) -| (26.95,8.73) -| (27.23,-7.83) -| (27.50,8.73) -| (27.78,-7.83) -| (28.05,6.37) -| (28.33,-5.47) -| (28.59,1.62) -| (28.88,-5.47) -| (29.14,6.37) -| (29.42,-5.47) -| (29.69,4.00) -| (29.97,-5.47) -| (30.25,4.00) -| (30.52,-5.47) -| (30.80,6.37) -| (31.06,-5.47) -| (31.34,4.00) -| (31.61,-5.47) -| (31.89,6.37) -| (32.16,-5.47) -| (32.44,4.00) -| (32.70,-5.47) -| (32.98,6.37) -| (33.27,-5.47) -| (33.53,4.00) -| (33.81,-5.47) -| (34.08,6.37) -| (34.36,-5.47) -| (34.62,4.00) -| (34.91,-5.47) -| (35.17,6.37) -| (35.45,-5.47) -| (35.72,6.37) -| (36.00,-5.47) -| (36.27,4.00) -| (36.55,-5.47) -| (36.83,6.37) -| (37.09,-5.47) -| (37.38,4.00) -| (37.64,-5.47) -| (37.92,6.37) -| (38.19,-5.47) -| (38.47,4.00) -| (38.73,-5.47) -| (39.02,6.37) -| (39.28,-5.47) -| (39.56,6.37) -| (39.85,-5.47) -| (40.11,4.00) -| (40.39,-5.47) -| (40.66,6.37) -| (40.94,-5.47) -| (41.20,4.00) -| (41.49,-5.47) -| (41.75,8.73) -| (42.03,-5.47) -| (42.30,13.46) -| (42.58,-7.83) -| (42.84,11.09) -| (43.13,-7.83) -| (43.41,11.09) -| (43.67,-7.83) -| (43.96,22.91) -| (44.22,-7.83) -| (44.50,8.73) -| (44.77,-7.83) -| (45.05,13.46) -| (45.31,-7.83) -| (45.60,8.73) -| (45.86,-7.83) -| (46.14,18.19) -| (46.43,-7.83) -| (46.69,8.73) -| (46.97,-5.47) -| (47.24,1.62) -| (47.52,-3.11) -| (47.78,1.62) -| (48.07,-5.47) -| (48.33,8.73) -| (48.61,-7.83) -| (48.88,22.91) -| (49.16,-7.83) -| (49.44,18.19) -| (49.71,-5.47) -| (49.99,4.00) -| (50.25,-5.47) -| (50.54,8.73) -| (50.80,-5.47) -| (51.08,4.00) -| (51.35,-3.11) -| (51.63,4.00) -| (51.89,-3.11) -| (52.18,6.37) -| (52.44,-3.11) -| (52.72,6.37) -| (53.00,-3.11) -| (53.27,4.00) -| (53.55,-3.11) -| (53.82,-0.74) -| (54.10,-3.11) -| (54.36,1.62) -| (54.64,-5.47) -| (54.91,8.73) -| (55.19,-5.47) -| (55.46,4.00) -| (55.74,-3.11) -| (56.02,4.00) -| (56.29,-3.11) -| (56.57,6.37) -| (56.83,-0.74) -| (57.11,6.37) -| (57.38,-0.74) -| (57.93,-0.74) -- (57.93,-5.47) -| (58.21,-0.74) -| (58.47,-3.11) -| (58.75,-0.74) -| (59.30,-0.74) -| (59.85,-0.74) -- (59.85,1.62) -| (60.13,-3.11) -| (60.40,6.37) -| (60.68,-5.47) -| (60.94,11.09) -| (61.22,-5.47) -| (61.49,4.00) -| (61.77,-3.11) -| (62.04,4.00) -| (62.32,-3.11) -| (62.60,1.62) -| (62.86,-3.11) -| (63.15,-0.74) -| (63.41,-3.11) -| (63.69,1.62) -| (63.96,-3.11) -| (64.24,1.62) -| (64.50,-3.11) -| (64.79,1.62) -| (65.05,-3.11) -| (65.33,1.62) -| (65.60,-3.11) -| (65.88,1.62) -| (66.16,-3.11) -| (66.43,4.00) -| (66.71,-0.74) -| (67.26,-0.74) -| (67.80,-0.74) -- (67.80,-3.11) -| (68.07,1.62) -| (68.35,-3.11) -| (68.61,1.62) -| (68.90,-3.11) -| (69.18,4.00) -| (69.44,-0.74) -| (69.73,6.37) -| (69.99,-3.11) -| (70.27,-0.74) -| (70.54,-3.11) -| (70.82,-0.74) -| (71.08,-3.11) -| (71.37,1.62) -| (71.63,-3.11) -| (71.91,1.62) -| (72.18,-3.11) -| (72.46,4.00) -| (72.74,-3.11) -| (73.01,1.62) -| (73.29,-0.74) -| (73.55,1.62) -| (73.84,-0.74) -| (74.10,4.00) -| (74.38,-0.74) -| (74.93,-0.74) -| (75.19,-3.11) -| (75.48,4.00) -| (75.76,-3.11) -| (76.02,1.62) -| (76.31,-3.11) -| (76.57,-0.74) -| (76.85,1.62) -| (77.12,-3.11) -| (77.40,1.62) -| (77.66,-3.11) -| (77.95,4.00) -| (78.21,-3.11) -| (78.49,1.62) -| (78.77,-3.11) -| (79.04,1.62) -| (79.32,-3.11) -| (79.59,4.00) -| (79.87,-3.11) -| (80.13,1.62) -| (80.42,-3.11) -| (80.68,-0.74) -| (81.23,-0.74) -| (81.77,-0.74) -| (82.34,-0.74) -| (82.60,1.62) -| (82.88,-3.11) -| (83.15,6.37) -| (83.43,-3.11) -| (83.70,1.62) -| (83.98,-0.74) -| (84.24,1.62) -| (84.52,-0.74) -| (85.07,-0.74) -| (85.62,-0.74) -- (85.62,-3.11) -| (85.90,-0.74) -| (86.17,-3.11) -| (86.45,4.00) -| (86.71,-0.74) -| (87.26,-0.74) -| (87.81,-0.74) -| (88.09,1.62) -| (88.35,-3.11) -| (88.63,1.62) -| (88.92,-3.11) -| (89.18,1.62) -| (89.46,-3.11) -| (89.73,1.62) -| (90.01,-3.11) -| (90.28,6.37) -| (90.56,-3.11) -| (90.82,4.00) -| (91.10,-3.11) -| (91.37,1.62) -| (91.65,-3.11) -| (91.93,4.00) -| (92.20,-3.11) -| (92.48,6.37) -| (92.74,-3.11) -| (93.03,4.00) -| (93.29,-3.11) -| (93.57,-0.74) -| (93.84,-3.11) -| (94.12,4.00) -| (94.39,-3.11) -| (94.67,4.00) -| (94.93,-3.11) -| (95.21,4.00) -| (95.50,-3.11) -| (95.76,1.62) -| (96.04,-3.11) -| (96.31,4.00) -| (96.59,-3.11) -| (96.85,1.62) -| (97.14,-3.11) -| (97.40,4.00) -| (97.68,-3.11) -| (97.95,1.62) -| (98.23,-0.74) -| (98.78,-0.74) -- (98.78,1.62) -| (99.06,-3.11) -| (99.32,6.37) -| (99.61,-5.47) -| (99.87,8.73) -| (100.15,-5.47) -| (100.42,8.73) -| (100.70,-5.47) -| (100.96,8.73) -| (101.25,-5.47) -| (101.53,8.73) -| (101.79,-5.47) -| (102.08,8.73) -| (102.34,-5.47) -| (102.62,11.09) -| (102.89,-5.47) -| (103.17,8.73) -| (103.43,-5.47) -| (103.72,11.09) -| (103.98,-5.47) -| (104.26,11.09) -| (104.53,-5.47) -| (104.81,13.46) -| (105.09,-5.47) -| (105.36,13.46) -| (105.64,-3.11) -| (105.90,6.37) -| (106.19,-3.11) -| (106.45,-0.74) -| (106.73,-5.47) -| (107.00,-3.11) -| (107.28,-5.47) -| (107.54,-3.11) -| (107.83,-5.47) -| (108.37,-5.47) -- (108.37,-3.11) -| (108.65,-5.47) -| (108.92,-0.74) -| (109.20,-7.83) -| (109.47,6.37) -| (109.75,-7.83) -| (110.01,4.00) -| (110.30,-5.47) -| (110.56,1.62) -| (110.84,-5.47) -| (111.11,4.00) -| (111.39,-5.47) -| (111.67,4.00) -| (111.94,-5.47) -| (112.22,4.00) -| (112.48,-5.47) -| (112.76,4.00) -| (113.03,-5.47) -| (113.31,4.00) -| (113.58,-5.47) -| (113.86,4.00) -| (114.12,-5.47) -| (114.41,4.00) -| (114.69,-5.47) -| (114.95,4.00) -| (115.23,-5.47) -| (115.50,4.00) -| (115.78,-5.47) -| (116.05,4.00) -| (116.33,-5.47) -| (116.59,4.00) -| (116.87,-5.47) -| (117.14,4.00) -| (117.42,-5.47) -| (117.69,6.37) -| (117.97,-7.83) -| (118.25,18.19) -| (118.51,-7.83) -| (118.80,6.37) -| (119.06,-5.47) -| (119.34,1.62) -| (119.61,-3.11) -| (119.89,-0.74) -| (120.16,-3.11) -| (120.44,-0.74) -| (120.98,-0.74) -- (120.98,-3.11) -| (121.27,-0.74) -| (121.53,-3.11) -| (121.81,1.62) -| (122.08,-3.11) -| (122.36,-0.74) -| (122.62,-3.11) -| (122.91,-0.74) -| (123.45,-0.74) -- (123.45,-3.11) -| (123.72,-0.74) -| (124.00,-3.11) -| (124.27,1.62) -| (124.55,-5.47) -| (124.83,6.37) -| (125.09,-5.47) -| (125.38,1.62) -| (125.64,-3.11) -| (125.92,-0.74) -| (126.19,-3.11) -| (126.47,1.62) -| (126.73,-5.47) -| (127.02,6.37) -| (127.28,-5.47) -| (127.56,1.62);

\end{scope}
\end{tikzpicture}
\caption{
Iteration vs. optimal relax factor in SOR method}
\label{fig:dynamic-relax-factor}
\end{figure}
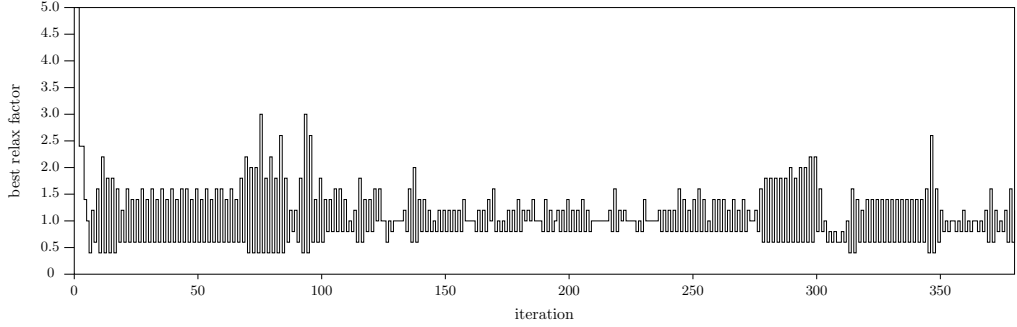

\subsubsection{Probabilistic Strategy}

The above discussion shows that fixed strategy can lead to ``invalid SOR'' cases, while the speedup is canceled out by extra computations introduced in enumerating strategy. Based on this observation, a tradeoff strategy can be taken, where a priori probabilistic distribution for candidate relax factors is used and a specific value in each iteration is selected by a roulette randomly. One extreme case in probabilistic strategy is that the same relax factor is selected in each iteration which degenerates into fixed strategy. And in another extreme case, every relax factor in each iteration is optimal, which is equivalent to the enumerating strategy. A comparison of performances for drawing graph \texttt{band-303} in three strategies is illustrated in Fig.~\ref{fig:relax-factor-det-method} where relax factors in every strategy fall in the interval [0.5,2]. From Fig.~\ref{fig:relax-factor-det-method} and Table~1 we could know that every iteration costs almost the same running time, and the number of iterations to reach the given stress is moderate compared with fixed strategy. The main advantage of probabilistic strategy lies in the fact that it avoids the blindness in choosing the relax factor in fixed strategy and decrease the time costs in enumerating strategy, thus giving rise to a good tradeoff between them.

\begin{figure}[h]
\centering
\begin{tabular}{cc}
        \scalebox{0.8}{\begin{tikzpicture}[x=0.8mm, y=1.1mm, inner xsep=0pt, inner ysep=0pt, outer xsep=0pt, outer ysep=0pt]

\path[xytics] (13.91,155.70) -- (89.45,155.70) -- (89.45,204.82) -- (13.91,204.82) -- cycle;

\node[outer ysep=6mm, anchor=north] at (51.68,155.70) {relax factor $\omega$\strut};
\node[rotate=90] at (4.4-3,180.26) {number of iterations\strut};

\begin{scope}
\fontsize{9}{10}\selectfont
	\path[xytics] (23.35,155.05) node[xtics label]{0\strut}   -- (23.35,155.70);
	\path[xytics] (32.79,155.05) node[xtics label]{0.5\strut} -- (32.79,155.70);
	\path[xytics] (42.23,155.05) node[xtics label]{1.0\strut} -- (42.23,155.70);
	\path[xytics] (51.67,155.05) node[xtics label]{1.5\strut} -- (51.67,155.70);
	\path[xytics] (61.11,155.05) node[xtics label]{2.0\strut} -- (61.11,155.70);
	\path[xytics] (70.55,155.05) node[xtics label]{\texttt{auto}\strut} -- (70.55,155.70);
	\path[xytics] (79.99,155.05) node[xtics label]{\texttt{enum}\strut} -- (79.99,155.70);

	\path[xytics] (12.39,155.70) node[ytics label]{  0\strut} -- (13.91,155.70);
	\path[xytics] (12.39,163.90) node[ytics label]{100\strut} -- (13.91,163.90);
	\path[xytics] (12.39,172.07) node[ytics label]{200\strut} -- (13.91,172.07);
	\path[xytics] (12.39,180.27) node[ytics label]{300\strut} -- (13.91,180.27);
	\path[xytics] (12.39,188.45) node[ytics label]{400\strut} -- (13.91,188.45);
	\path[xytics] (12.39,196.64) node[ytics label]{500\strut} -- (13.91,196.64);
	\path[xytics] (12.39,204.82) node[ytics label]{600\strut} -- (13.91,204.82);
\end{scope}
\begin{scope}
	\path[filledbars] (26.17,155.76) -- (26.17,196.14) -- (20.51,196.14) -- (20.51,155.76) -- cycle;
	\path[filledbars] (35.61,155.76) -- (35.61,181.75) -- (29.95,181.75) -- (29.95,155.76) -- cycle;
	\path[filledbars] (45.07,155.76) -- (45.07,175.28) -- (39.39,175.28) -- (39.39,155.76) -- cycle;
	\path[filledbars] (54.51,155.76) -- (54.51,172.42) -- (48.83,172.42) -- (48.83,155.76) -- cycle;
	\path[filledbars] (63.95,155.76) -- (63.95,170.77) -- (58.27,170.77) -- (58.27,155.76) -- cycle;
	\path[filledbars] (73.39,155.76) -- (73.39,174.05) -- (67.73,174.05) -- (67.73,155.76) -- cycle;
	\path[filledbars] (82.83,155.76) -- (82.83,170.69) -- (77.17,170.69) -- (77.17,155.76) -- cycle;
\end{scope}
\end{tikzpicture}
&
        \scalebox{0.8}{\begin{tikzpicture}[x=0.8mm, y=1.1mm, inner xsep=0pt, inner ysep=0pt, outer xsep=0pt, outer ysep=0pt]

\path[xytics] (18.96,155.70) -- (94.50,155.70) -- (94.50,204.82) -- (18.96,204.82) -- cycle;

\node[outer ysep=6mm, anchor=north] at (56.73,155.70) {relax factor $\omega$\strut};
\node[rotate=90] at (9.5-3,180.26){average time per iteration (sec.)\strut};

\begin{scope}
\fontsize{9}{10}\selectfont
	\path[xytics] (28.40,154.99) node[xtics label]{0\strut}   -- (28.40,155.70);
	\path[xytics] (37.84,154.99) node[xtics label]{0.5\strut} -- (37.84,155.70);
	\path[xytics] (47.28,154.99) node[xtics label]{1.0\strut} -- (47.28,155.70);
	\path[xytics] (56.72,154.99) node[xtics label]{1.5\strut} -- (56.72,155.70);
	\path[xytics] (66.16,154.99) node[xtics label]{2.0\strut} -- (66.16,155.70);
	\path[xytics] (75.60,154.99) node[xtics label]{\texttt{auto}\strut} -- (75.60,155.70);
	\path[xytics] (85.04,154.99) node[xtics label]{\texttt{enum}\strut} -- (85.04,155.70);

	\path[xytics] (17.45,155.70) node[ytics label]{0   \strut} -- (18.96,155.70);
	\path[xytics] (17.45,167.99) node[ytics label]{0.05\strut} -- (18.96,167.99);
	\path[xytics] (17.45,180.27) node[ytics label]{0.10\strut} -- (18.96,180.27);
	\path[xytics] (17.45,192.55) node[ytics label]{0.15\strut} -- (18.96,192.55);
	\path[xytics] (17.45,204.82) node[ytics label]{0.20\strut} -- (18.96,204.82);
\end{scope}
\begin{scope}
	\path[filledbars] (31.22,155.7) -- (31.22,169.96) -- (25.56,169.96) -- (25.56,155.7) -- cycle;
	\path[filledbars] (40.66,155.7) -- (40.66,176.47) -- (35.00,176.47) -- (35.00,155.7) -- cycle;
	\path[filledbars] (50.12,155.7) -- (50.12,176.43) -- (44.44,176.43) -- (44.44,155.7) -- cycle;
	\path[filledbars] (59.56,155.7) -- (59.56,176.47) -- (53.88,176.47) -- (53.88,155.7) -- cycle;
	\path[filledbars] (69.00,155.7) -- (69.00,176.47) -- (63.32,176.47) -- (63.32,155.7) -- cycle;
	\path[filledbars] (78.44,155.7) -- (78.44,176.47) -- (72.78,176.47) -- (72.78,155.7) -- cycle;
	\path[filledbars] (87.88,155.7) -- (87.88,201.34) -- (82.22,201.34) -- (82.22,155.7) -- cycle;
\end{scope}
\end{tikzpicture}
\\
\footnotesize{}(a) iteration numbers
&
\footnotesize{}(b) running time per iteration
\end{tabular}
\caption{
Performance comparison of three strategies}
\label{fig:relax-factor-det-method}
\begin{minipage}{0.8\textwidth}
$\omega=0$: non-SOR method, 
$\omega=0.5, 1, 1.5, 2$: SOR method with fixed strategy, 
¡°\texttt{enum}¡±: SOR method with enumerating strategy,
¡°\texttt{auto}¡±: SOR method with probabilistic strategy (a priori probability distribution $P(\omega=0.5)=P(\omega=1.0)=0.3$, $P(\omega=1.5)=P(\omega=2.0)=0.2$).
\end{minipage}
\end{figure}

\subsection{Comparison of running time}

According to the discussion in the last subsection, we adopt three strategies (fixed, enumerating and probabilistic) on 9 data sets of graph commonly used in literature for graph drawing and test their performance in running time. The results are shown in Table~1, where \emph{a priori} probability distribution $P(\omega=0.5)=P(\omega=1.0)=0.3$, $P(\omega=1.5)=P(\omega=2.0)=0.2$ is used in probabilistic strategy. From test results, it can be shown that SOR method with fixed strategy or probabilistic strategy can decrease 30\%--40\% of running time.

\subsection{Comparison of Layout Effects}

From Algorithm 2 we know that SOR is an accelerating version of Algorithm 1 without changing the flowchart, and the layout results of these two algorithms should be the same. Notice that translation or rotation of whole placement can lead to some different layouts apparently. There are two examples: a module of protein-protein interaction network (unweighted graph) and the railway network of Chinese mainland (with edges weighted by distance between two connected vertices), drawn by using the SOR force-directed layout algorithm in Fig.~\ref{fig:layout-effect}. Compared with the layout by original non-SOR layout algorithm (not shown here and available in supplementary material), SOR method has almost the same placement as the non-SOR method regardless of translation or rotation mentioned above.

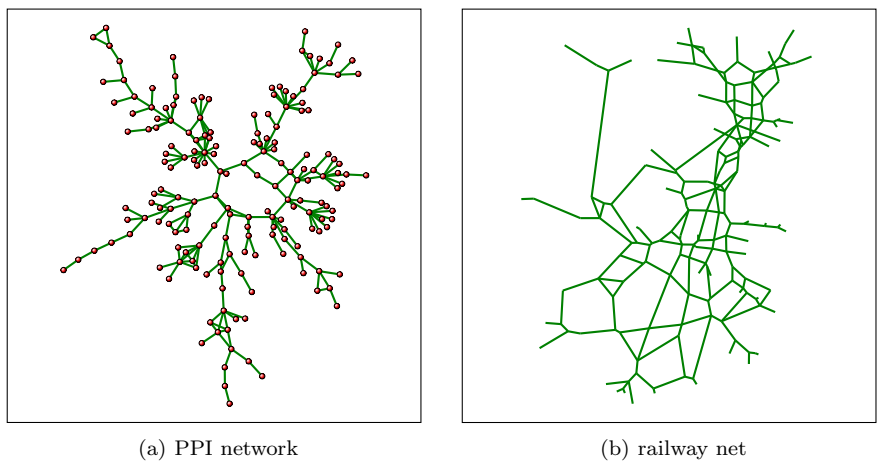
\begin{figure}[h]
\centering
\begin{tabular}{cc}
        \scalebox{1}{\begin{tikzpicture}[every edge/.style={edgestyle},scale=0.5]
\path[draw,use as bounding box] (-0.5, -0.5) rectangle (10.5, 10.5);
\path[shape=coordinate]
	(5.101, 1.901) coordinate (1)
	(6.872, 4.638) coordinate (2)
	(3.802, 3.321) coordinate (3)
	(7.797, 3.526) coordinate (4)
	(6.593, 8.224) coordinate (5)
	(5.225, 3.067) coordinate (6)
	(7.302, 3.898) coordinate (7)
	(6.613, 6.948) coordinate (8)
	(3.589, 5.699) coordinate (9)
	(4.669, 8.152) coordinate (10)
	(7.746, 3.005) coordinate (11)
	(7.532, 5.090) coordinate (12)
	(2.488, 9.112) coordinate (13)
	(4.088, 3.772) coordinate (14)
	(4.463, 8.066) coordinate (15)
	(8.266, 2.597) coordinate (16)
	(8.333, 6.380) coordinate (17)
	(7.858, 5.346) coordinate (18)
	(4.215, 6.551) coordinate (19)
	(7.799, 4.585) coordinate (20)
	(6.906, 8.314) coordinate (21)
	(2.884, 8.171) coordinate (22)
	(3.967, 8.709) coordinate (23)
	(5.460, 1.454) coordinate (24)
	(7.645, 6.213) coordinate (25)
	(4.750, 6.403) coordinate (26)
	(8.194, 6.504) coordinate (27)
	(8.137, 4.908) coordinate (28)
	(5.826, 2.264) coordinate (29)
	(6.553, 4.967) coordinate (30)
	(2.220, 9.525) coordinate (31)
	(5.037, 5.535) coordinate (32)
	(8.294, 6.136) coordinate (33)
	(0.997, 3.555) coordinate (34)
	(7.420, 9.905) coordinate (35)
	(3.159, 4.940) coordinate (36)
	(6.874, 4.381) coordinate (37)
	(2.762, 4.510) coordinate (38)
	(6.598, 6.768) coordinate (39)
	(4.993, 6.642) coordinate (40)
	(7.993, 5.254) coordinate (41)
	(7.376, 3.474) coordinate (42)
	(3.557, 3.620) coordinate (43)
	(7.495, 5.940) coordinate (44)
	(5.169, 6.179) coordinate (45)
	(8.297, 9.554) coordinate (46)
	(3.688, 6.437) coordinate (47)
	(6.584, 4.699) coordinate (48)
	(5.284, 0.966) coordinate (49)
	(5.328, 6.120) coordinate (50)
	(2.837, 7.721) coordinate (51)
	(2.647, 5.206) coordinate (52)
	(7.610, 4.655) coordinate (53)
	(1.787, 9.749) coordinate (54)
	(2.602, 8.611) coordinate (55)
	(7.116, 5.246) coordinate (56)
	(4.609, 4.210) coordinate (57)
	(3.922, 7.945) coordinate (58)
	(6.401, 7.058) coordinate (59)
	(8.295, 8.754) coordinate (60)
	(5.005, 4.740) coordinate (61)
	(6.841, 3.987) coordinate (62)
	(6.540, 8.405) coordinate (63)
	(4.760, 7.138) coordinate (64)
	(4.479, 6.712) coordinate (65)
	(6.280, 0.730) coordinate (66)
	(6.058, 6.956) coordinate (67)
	(7.554, 9.251) coordinate (68)
	(3.866, 6.854) coordinate (69)
	(4.428, 3.797) coordinate (70)
	(4.045, 5.490) coordinate (71)
	(5.308, 4.414) coordinate (72)
	(6.438, 7.706) coordinate (73)
	(3.462, 7.345) coordinate (74)
	(2.699, 4.893) coordinate (75)
	(4.672, 1.610) coordinate (76)
	(5.256, 2.476) coordinate (77)
	(6.099, 3.951) coordinate (78)
	(3.851, 7.533) coordinate (79)
	(5.426, 5.039) coordinate (80)
	(1.818, 4.071) coordinate (81)
	(7.331, 7.794) coordinate (82)
	(7.291, 5.408) coordinate (83)
	(5.920, 4.965) coordinate (84)
	(3.529, 4.844) coordinate (85)
	(5.726, 3.466) coordinate (86)
	(8.519, 6.099) coordinate (87)
	(8.003, 4.696) coordinate (88)
	(4.879, 7.067) coordinate (89)
	(4.249, 4.026) coordinate (90)
	(5.582, 2.248) coordinate (91)
	(4.264, 4.626) coordinate (92)
	(6.424, 5.118) coordinate (93)
	(3.845, 5.185) coordinate (94)
	(4.527, 7.011) coordinate (95)
	(6.630, 5.790) coordinate (96)
	(3.974, 4.564) coordinate (97)
	(5.367, 5.204) coordinate (98)
	(3.259, 7.300) coordinate (99)
	(3.716, 7.867) coordinate (100)
	(7.031, 6.166) coordinate (101)
	(6.068, 7.689) coordinate (102)
	(7.872, 4.909) coordinate (103)
	(3.720, 6.649) coordinate (104)
	(7.109, 8.179) coordinate (105)
	(5.288, 0.470) coordinate (106)
	(5.795, 6.400) coordinate (107)
	(8.040, 6.639) coordinate (108)
	(6.789, 6.400) coordinate (109)
	(7.221, 4.189) coordinate (110)
	(8.069, 9.057) coordinate (111)
	(4.522, 3.614) coordinate (112)
	(3.409, 5.585) coordinate (113)
	(7.215, 5.946) coordinate (114)
	(6.964, 5.430) coordinate (115)
	(8.282, 5.732) coordinate (116)
	(7.672, 8.803) coordinate (117)
	(6.321, 6.713) coordinate (118)
	(6.211, 6.911) coordinate (119)
	(7.349, 8.004) coordinate (120)
	(3.798, 4.784) coordinate (121)
	(2.273, 4.285) coordinate (122)
	(7.344, 9.391) coordinate (123)
	(6.779, 4.897) coordinate (124)
	(8.836, 8.775) coordinate (125)
	(5.413, 0.000) coordinate (126)
	(5.362, 1.966) coordinate (127)
	(3.911, 9.221) coordinate (128)
	(6.661, 7.368) coordinate (129)
	(3.340, 5.347) coordinate (130)
	(5.919, 4.467) coordinate (131)
	(4.801, 3.438) coordinate (132)
	(6.003, 2.972) coordinate (133)
	(3.565, 5.212) coordinate (134)
	(6.148, 6.076) coordinate (135)
	(5.891, 4.705) coordinate (136)
	(4.747, 6.683) coordinate (137)
	(3.340, 7.883) coordinate (138)
	(7.392, 6.239) coordinate (139)
	(2.054, 8.564) coordinate (140)
	(5.815, 4.036) coordinate (141)
	(6.922, 7.900) coordinate (142)
	(4.907, 7.218) coordinate (143)
	(8.405, 5.851) coordinate (144)
	(8.678, 9.133) coordinate (145)
	(6.555, 4.500) coordinate (146)
	(9.043, 6.075) coordinate (147)
	(6.157, 7.193) coordinate (148)
	(3.494, 7.581) coordinate (149)
	(5.178, 3.516) coordinate (150)
	(8.343, 3.440) coordinate (151)
	(4.552, 6.336) coordinate (152)
	(8.150, 5.132) coordinate (153)
	(7.895, 6.048) coordinate (154)
	(7.467, 6.624) coordinate (155)
	(2.130, 10.000) coordinate (156)
	(7.809, 9.180) coordinate (157)
	(6.749, 8.435) coordinate (158)
	(4.863, 8.101) coordinate (159)
	(7.368, 8.369) coordinate (160)
	(1.390, 3.832) coordinate (161)
	(4.469, 6.482) coordinate (162)
	(4.476, 5.378) coordinate (163)
	(5.413, 3.980) coordinate (164)
	(4.629, 7.610) coordinate (165)
	(2.346, 8.012) coordinate (166)
	(4.328, 7.212) coordinate (167)
	(4.046, 4.156) coordinate (168)
	(5.920, 1.126) coordinate (169)
	(4.899, 2.162) coordinate (170)
	(3.848, 6.283) coordinate (171)
	(5.176, 3.755) coordinate (172)
	(8.075, 3.071) coordinate (173)
	(7.510, 7.806) coordinate (174)
	(3.991, 8.170) coordinate (175)
	(4.305, 5.034) coordinate (176)
	(5.053, 6.850) coordinate (177)
	(2.706, 7.239) coordinate (178)
	(3.363, 8.396) coordinate (179);
\path
	(1) edge (24) edge (76) edge (77) edge (127) edge (170)
	(2) edge (30) edge (110)
	(3) edge (14)
	(4) edge (7) edge (11) edge (151) edge (173)
	(5) edge (142)
	(6) edge (77) edge (172)
	(7) edge (37) edge (42)
	(8) edge (118)
	(9) edge (71)
	(10) edge (165)
	(11) edge (173)
	(12) edge (20) edge (28) edge (41) edge (53) edge (88) edge (103) edge (115) edge (153)
	(13) edge (31) edge (55)
	(14) edge (43) edge (57) edge (70) edge (90) edge (112) edge (168)
	(15) edge (165)
	(16) edge (173)
	(17) edge (154)
	(18) edge (83)
	(19) edge (47) edge (69) edge (104) edge (137) edge (171)
	(21) edge (142)
	(22) edge (55) edge (138) edge (166)
	(23) edge (128) edge (175)
	(24) edge (49) edge (127) edge (169)
	(25) edge (108) edge (114) edge (154)
	(26) edge (137)
	(27) edge (154)
	(29) edge (77)
	(30) edge (37) edge (48) edge (84) edge (93) edge (115) edge (124) edge (146)
	(31) edge (54) edge (156)
	(32) edge (45) edge (80) edge (98) edge (163)
	(33) edge (154)
	(34) edge (161)
	(35) edge (123)
	(36) edge (38) edge (52) edge (75) edge (94)
	(38) edge (122)
	(39) edge (118)
	(40) edge (137)
	(44) edge (114)
	(45) edge (50) edge (107) edge (137)
	(46) edge (111)
	(49) edge (106)
	(51) edge (138)
	(54) edge (156)
	(55) edge (140)
	(56) edge (115)
	(57) edge (61) edge (70) edge (90) edge (112) edge (168)
	(58) edge (79) edge (175)
	(59) edge (118)
	(60) edge (117) edge (125) edge (145)
	(61) edge (98)
	(62) edge (146)
	(63) edge (142)
	(64) edge (95) edge (137) edge (165)
	(65) edge (137)
	(66) edge (169)
	(67) edge (118)
	(68) edge (117) edge (123)
	(70) edge (90) edge (112)
	(71) edge (113) edge (134) edge (163)
	(72) edge (80) edge (164) edge (172)
	(73) edge (129)
	(74) edge (79)
	(77) edge (91) edge (127) edge (170)
	(78) edge (131)
	(79) edge (99) edge (100) edge (138) edge (149) edge (167) edge (175)
	(80) edge (84)
	(81) edge (122) edge (161)
	(82) edge (142)
	(83) edge (115)
	(84) edge (98) edge (131) edge (136)
	(85) edge (94)
	(86) edge (133) edge (164)
	(87) edge (147) edge (154)
	(89) edge (95) edge (137) edge (143) edge (165)
	(92) edge (97) edge (176)
	(94) edge (130) edge (163)
	(95) edge (137) edge (143) edge (167)
	(96) edge (101) edge (115) edge (135)
	(97) edge (121) edge (176)
	(99) edge (178)
	(101) edge (114) edge (155)
	(102) edge (148)
	(105) edge (142)
	(106) edge (126)
	(107) edge (118) edge (135)
	(109) edge (114) edge (118)
	(111) edge (117)
	(114) edge (115) edge (139) edge (154)
	(116) edge (154)
	(117) edge (123) edge (157) edge (160)
	(118) edge (119) edge (129) edge (148)
	(120) edge (142)
	(121) edge (176)
	(127) edge (170)
	(129) edge (142)
	(131) edge (136) edge (141)
	(132) edge (172)
	(137) edge (143) edge (152) edge (162) edge (167) edge (177)
	(138) edge (179)
	(142) edge (158) edge (160) edge (174)
	(143) edge (165)
	(144) edge (154)
	(150) edge (164)
	(159) edge (165)
	(163) edge (176)
	(165) edge (167);
\foreach \x in {1,2,...,179}{\path[nodestyle] (\x) circle (2pt);}
\end{tikzpicture}}
&
        \scalebox{1}{\begin{tikzpicture}[every edge/.style={edgestyle},xscale=0.5,yscale=-.5]
\path[draw,use as bounding box] (-0.5, -0.5) rectangle (10.5, 10.5);
\path[shape=coordinate]
	(4.687, 0.855) coordinate (1)
	(5.509, 0.000) coordinate (2)
	(6.798, 0.229) coordinate (3)
	(5.678, 0.956) coordinate (4)
	(5.607, 0.456) coordinate (5)
	(5.206, 0.396) coordinate (6)
	(8.787, 0.762) coordinate (7)
	(5.932, 0.672) coordinate (8)
	(5.913, 0.527) coordinate (9)
	(6.817, 0.899) coordinate (10)
	(8.318, 0.054) coordinate (11)
	(8.358, 0.929) coordinate (12)
	(7.266, 1.190) coordinate (13)
	(8.087, 0.422) coordinate (14)
	(6.731, 1.527) coordinate (15)
	(5.774, 1.782) coordinate (16)
	(7.939, 0.779) coordinate (17)
	(2.223, 0.461) coordinate (18)
	(7.643, 0.623) coordinate (19)
	(7.955, 3.010) coordinate (20)
	(7.851, 1.123) coordinate (21)
	(7.447, 2.003) coordinate (22)
	(7.207, 2.069) coordinate (23)
	(3.075, 0.995) coordinate (24)
	(7.914, 1.425) coordinate (25)
	(6.983, 2.402) coordinate (26)
	(8.294, 2.527) coordinate (27)
	(3.387, 1.138) coordinate (28)
	(4.518, 3.237) coordinate (29)
	(7.867, 2.453) coordinate (30)
	(7.956, 2.425) coordinate (31)
	(4.004, 0.860) coordinate (32)
	(6.920, 3.010) coordinate (33)
	(6.305, 3.275) coordinate (34)
	(7.867, 2.639) coordinate (35)
	(6.966, 3.113) coordinate (36)
	(6.899, 3.174) coordinate (37)
	(6.512, 3.431) coordinate (38)
	(6.779, 2.961) coordinate (39)
	(6.842, 3.437) coordinate (40)
	(6.473, 3.596) coordinate (41)
	(7.656, 3.154) coordinate (42)
	(5.203, 3.864) coordinate (43)
	(6.031, 4.092) coordinate (44)
	(4.376, 3.651) coordinate (45)
	(6.189, 2.682) coordinate (46)
	(5.332, 4.397) coordinate (47)
	(4.915, 4.270) coordinate (48)
	(5.424, 4.561) coordinate (49)
	(8.109, 5.414) coordinate (50)
	(7.959, 5.232) coordinate (51)
	(6.001, 4.860) coordinate (52)
	(7.889, 5.307) coordinate (53)
	(5.467, 4.760) coordinate (54)
	(6.327, 4.909) coordinate (55)
	(7.120, 5.144) coordinate (56)
	(6.469, 4.932) coordinate (57)
	(2.939, 4.503) coordinate (58)
	(1.402, 4.543) coordinate (59)
	(3.462, 4.695) coordinate (60)
	(3.267, 4.665) coordinate (61)
	(6.990, 5.213) coordinate (62)
	(7.546, 5.273) coordinate (63)
	(6.632, 5.202) coordinate (64)
	(1.070, 4.555) coordinate (65)
	(7.574, 5.191) coordinate (66)
	(2.640, 5.062) coordinate (67)
	(4.129, 5.296) coordinate (68)
	(4.218, 5.348) coordinate (69)
	(7.105, 5.674) coordinate (70)
	(5.058, 5.417) coordinate (71)
	(6.191, 5.642) coordinate (72)
	(5.773, 5.654) coordinate (73)
	(7.045, 6.016) coordinate (74)
	(5.771, 5.502) coordinate (75)
	(5.303, 5.526) coordinate (76)
	(5.290, 5.626) coordinate (77)
	(5.706, 5.828) coordinate (78)
	(5.341, 5.763) coordinate (79)
	(6.153, 5.875) coordinate (80)
	(6.521, 5.949) coordinate (81)
	(4.749, 5.661) coordinate (82)
	(4.548, 5.730) coordinate (83)
	(4.007, 5.721) coordinate (84)
	(5.786, 6.530) coordinate (85)
	(4.147, 5.717) coordinate (86)
	(4.489, 5.777) coordinate (87)
	(5.295, 6.030) coordinate (88)
	(6.685, 6.354) coordinate (89)
	(6.010, 6.341) coordinate (90)
	(6.109, 6.594) coordinate (91)
	(3.693, 6.049) coordinate (92)
	(4.378, 6.214) coordinate (93)
	(7.024, 6.663) coordinate (94)
	(4.316, 6.273) coordinate (95)
	(6.681, 6.722) coordinate (96)
	(7.375, 7.181) coordinate (97)
	(7.306, 7.213) coordinate (98)
	(7.066, 7.047) coordinate (99)
	(6.874, 7.045) coordinate (100)
	(7.819, 7.562) coordinate (101)
	(6.097, 7.196) coordinate (102)
	(6.038, 7.264) coordinate (103)
	(6.118, 7.653) coordinate (104)
	(2.176, 6.989) coordinate (105)
	(4.464, 7.993) coordinate (106)
	(6.762, 8.312) coordinate (107)
	(5.251, 8.188) coordinate (108)
	(6.605, 8.665) coordinate (109)
	(3.689, 8.103) coordinate (110)
	(3.569, 8.035) coordinate (111)
	(1.713, 7.849) coordinate (112)
	(7.121, 8.943) coordinate (113)
	(2.127, 7.882) coordinate (114)
	(7.375, 8.688) coordinate (115)
	(2.313, 8.072) coordinate (116)
	(3.321, 9.477) coordinate (117)
	(5.420, 9.280) coordinate (118)
	(5.638, 9.335) coordinate (119)
	(4.128, 9.190) coordinate (120)
	(5.732, 9.484) coordinate (121)
	(1.556, 8.522) coordinate (122)
	(3.776, 9.601) coordinate (123)
	(4.033, 10.000) coordinate (124)
	(4.586, 9.595) coordinate (125)
	(3.824, 9.716) coordinate (126)
	(3.608, 9.676) coordinate (127)
	(4.618, 9.724) coordinate (128)
	(6.466, 1.196) coordinate (129)
	(8.228, 0.961) coordinate (130)
	(6.518, 1.440) coordinate (131)
	(7.212, 1.438) coordinate (132)
	(6.738, 1.919) coordinate (133)
	(6.516, 1.980) coordinate (134)
	(7.783, 0.949) coordinate (135)
	(7.505, 1.855) coordinate (136)
	(6.733, 2.287) coordinate (137)
	(6.802, 2.583) coordinate (138)
	(7.186, 2.402) coordinate (139)
	(7.503, 2.461) coordinate (140)
	(7.776, 2.498) coordinate (141)
	(7.150, 2.897) coordinate (142)
	(6.839, 3.284) coordinate (143)
	(6.350, 3.803) coordinate (144)
	(5.168, 3.789) coordinate (145)
	(6.204, 4.150) coordinate (146)
	(6.810, 3.816) coordinate (147)
	(6.378, 4.195) coordinate (148)
	(6.236, 4.273) coordinate (149)
	(5.401, 4.090) coordinate (150)
	(6.475, 4.342) coordinate (151)
	(5.520, 4.849) coordinate (152)
	(3.157, 5.061) coordinate (153)
	(6.513, 5.527) coordinate (154)
	(5.656, 5.767) coordinate (155)
	(5.534, 6.108) coordinate (156)
	(5.970, 6.264) coordinate (157)
	(5.025, 6.573) coordinate (158)
	(7.695, 6.980) coordinate (159)
	(6.969, 6.923) coordinate (160)
	(7.412, 7.386) coordinate (161)
	(3.785, 6.748) coordinate (162)
	(5.520, 7.082) coordinate (163)
	(3.115, 6.670) coordinate (164)
	(3.507, 7.153) coordinate (165)
	(6.394, 7.806) coordinate (166)
	(5.336, 7.894) coordinate (167)
	(7.125, 8.636) coordinate (168)
	(2.636, 8.131) coordinate (169)
	(7.289, 8.655) coordinate (170)
	(4.161, 8.853) coordinate (171)
	(3.926, 9.406) coordinate (172);
\path
	(1) edge (4)
	(2) edge (5)
	(3) edge (10)
	(4) edge (5) edge (129)
	(5) edge (8)
	(6) edge (9)
	(7) edge (12)
	(8) edge (9) edge (131)
	(10) edge (13) edge (129)
	(11) edge (14)
	(12) edge (130)
	(13) edge (130) edge (132)
	(14) edge (17)
	(15) edge (131) edge (132) edge (133)
	(16) edge (134)
	(17) edge (130) edge (135)
	(18) edge (24)
	(19) edge (135)
	(20) edge (138)
	(21) edge (25) edge (135)
	(22) edge (23) edge (136) edge (140)
	(23) edge (132) edge (133) edge (139)
	(24) edge (28)
	(25) edge (136)
	(26) edge (137) edge (138) edge (139)
	(27) edge (30)
	(28) edge (32) edge (58)
	(29) edge (145)
	(30) edge (31) edge (141)
	(33) edge (36) edge (138) edge (143)
	(34) edge (38) edge (138) edge (146)
	(35) edge (141)
	(36) edge (37) edge (142)
	(37) edge (39) edge (143)
	(38) edge (41) edge (143)
	(39) edge (46) edge (142)
	(40) edge (41) edge (143) edge (147)
	(41) edge (144)
	(42) edge (142)
	(43) edge (45) edge (145) edge (150)
	(44) edge (146) edge (150)
	(45) edge (60)
	(47) edge (48) edge (49) edge (150)
	(49) edge (54) edge (149)
	(50) edge (53)
	(51) edge (53)
	(52) edge (55) edge (73) edge (149) edge (152)
	(53) edge (63)
	(54) edge (152)
	(55) edge (57) edge (72) edge (149)
	(56) edge (62)
	(57) edge (64) edge (151)
	(58) edge (61) edge (153)
	(59) edge (65) edge (67)
	(60) edge (61) edge (86)
	(61) edge (153)
	(62) edge (63) edge (64)
	(63) edge (66)
	(64) edge (154)
	(67) edge (153)
	(68) edge (69)
	(69) edge (83)
	(70) edge (154)
	(71) edge (77) edge (82) edge (152)
	(72) edge (73) edge (80) edge (154)
	(73) edge (75) edge (155)
	(74) edge (81)
	(76) edge (77) edge (152)
	(77) edge (79)
	(78) edge (80) edge (155)
	(79) edge (82) edge (88) edge (155)
	(80) edge (81) edge (157)
	(81) edge (89) edge (154)
	(82) edge (83)
	(83) edge (87)
	(84) edge (86) edge (92) edge (153)
	(85) edge (157)
	(86) edge (87)
	(87) edge (93)
	(88) edge (156) edge (158)
	(89) edge (94) edge (96)
	(90) edge (91) edge (102) edge (157)
	(92) edge (95) edge (164)
	(93) edge (95) edge (158)
	(94) edge (159) edge (160)
	(95) edge (162)
	(96) edge (102) edge (160)
	(97) edge (98)
	(98) edge (99) edge (161)
	(99) edge (160) edge (166)
	(100) edge (160)
	(101) edge (161)
	(102) edge (103)
	(103) edge (104) edge (163)
	(104) edge (119) edge (166) edge (167)
	(105) edge (114) edge (164)
	(106) edge (110) edge (158) edge (167) edge (171)
	(107) edge (109) edge (166) edge (168)
	(108) edge (118) edge (167) edge (171)
	(110) edge (111) edge (171)
	(111) edge (165) edge (169)
	(112) edge (114)
	(113) edge (168)
	(114) edge (116)
	(115) edge (170)
	(116) edge (122) edge (169)
	(117) edge (172)
	(118) edge (119) edge (125)
	(119) edge (121)
	(120) edge (125) edge (171) edge (172)
	(123) edge (126) edge (127) edge (172)
	(124) edge (172)
	(125) edge (128)
	(129) edge (131)
	(131) edge (134)
	(132) edge (135) edge (136)
	(133) edge (134) edge (137)
	(134) edge (137)
	(137) edge (138)
	(138) edge (145)
	(139) edge (140) edge (142)
	(140) edge (141) edge (142)
	(144) edge (146)
	(146) edge (148) edge (149)
	(147) edge (148) edge (151)
	(148) edge (149) edge (151)
	(149) edge (151)
	(155) edge (156)
	(156) edge (157) edge (163)
	(158) edge (163)
	(159) edge (161)
	(161) edge (166)
	(162) edge (164) edge (165)
	(163) edge (167)
	(164) edge (165)
	(168) edge (170);
\end{tikzpicture}}
\\
\footnotesize{}(a) PPI network
& 
\footnotesize{}(b) railway net
\end{tabular}
\caption{
Graph layouts by SOR force-directed method}
\label{fig:layout-effect}
\end{figure}

\begin{table}
\centering
\tiny
\caption{Performance comparison of SOR method and original non-SOR method}
\tabcolsep 1.5pt
\centering
\begin{tabular*}{1.2\textwidth}{rrcccccccc}
\hline
       data set &       \#vertices & rel err &            & $\omega=0$ & $\omega=0.5$ & $\omega=1.0$ & $\omega=1.5$ &    auto &  enum \\
\hline
   \texttt{band-46} &         46 &   1e-06 &       num of iter &        217/100.0\% &        145/66.8\% &        136/62.7\% &        102/47.0\% &        131/60.4\% &        104/47.9\% \\
           &            &            &       running time &     1.716/100.0\% &     1.721/100.3\% &     1.607/93.7\% &     1.212/70.6\% &     1.553/90.5\% &     2.852/166.2\%  \\
\hline
   \texttt{band-62} &         62 &   1e-06 &       num of iter &        284/100.0\% &        190/66.9\% &         79/27.8\% &        125/44.0\% &         76/26.8\% &         63/22.2\% \\
           &            &            &       running time &     3.219/100.0\% &     3.249/100.9\% &     1.352/42.0\% &     2.121/65.9\% &     1.308/40.6\% &     2.492/77.4\%  \\
\hline
   \texttt{band-86} &         86 &   1e-06 &       num of iter &        188/100.0\% &        126/67.0\% &         95/50.5\% &         82/43.6\% &         90/47.9\% &         82/43.6\% \\
           &            &            &       running time &     3.177/100.0\% &     3.186/100.3\% &     2.406/75.7\% &     2.078/65.4\% &     2.298/72.4\% &     4.826/151.9\%  \\
\hline
  \texttt{band-156} &        156 &   1e-05 &       num of iter &        123/100.0\% &         83/67.5\% &         63/51.2\% &         54/43.9\% &         57/46.3\% &         57/46.3\% \\
           &            &            &       running time &     4.844/100.0\% &     4.886/100.9\% &     3.704/76.5\% &     3.189/65.8\% &     3.359/69.3\% &     7.781/160.6\%  \\
\hline
  \texttt{band-303} &        303 &   1e-05 &       num of iter &        125/100.0\% &         84/67.2\% &         64/51.2\% &         55/44.0\% &         58/46.4\% &         58/46.4\% \\
           &            &            &       running time &    15.331/100.0\%  &    15.237/99.4\%  &    11.645/76.0\%  &    10.007/65.3\%  &    10.536/68.7\%  &    24.093/157.1\%  \\
\hline
  \texttt{band-516} &        516 &   1e-05 &       num of iter &        211/100.0\% &        141/66.8\% &        107/50.7\% &         92/43.6\% &         98/46.4\% &         82/38.9\% \\
           &            &            &       running time &    65.517/100.0\%  &    64.060/97.8\%  &    48.655/74.3\%  &    41.857/63.9\%  &    44.599/68.1\%  &    84.681/129.3\%  \\
\hline
 \texttt{grid-1109} &       1109 &   1e-04 &       num of iter &        102/100.0\% &         71/69.6\% &         55/53.9\% &         49/48.0\% &         53/52.0\% &         48/47.1\% \\
           &            &            &       running time &   126.963100.0\%/ &   127.198/100.2\%  &    99.041/78.0\%  &    88.149/69.4\%  &    96.121/75.7\% &   191.292/150.7\%  \\
\hline
 \texttt{grid-1158} &       1158 &   1e-04 &       num of iter &        154/100.0\% &        104/67.5\% &         79/51.3\% &         66/42.9\% &         76/49.4\% &         64/41.6\% \\
           &            &            &       running time &   214.879/100.0\%  &   213.111/99.2\%  &   161.431/75.1\%  &   135.065/62.9\%  &   154.820/72.0\%  &   290.208/135.1\%  \\
\hline
  \texttt{net-2305} &       2305 &   1e-04 &       num of iter &         57/100.0\% &         39/68.4\% &         31/54.4\% &         27/47.4\% &         28/49.1\% &         24/42.1\% \\
           &            &            &       running time &   301.805/100.0\%  &   306.605/101.6\%  &   246.866/81.8\%  &   215.86371.5\%/ &   223.505/74.1\%  &   421.256/139.6\%  \\
\hline
        ÆœŸù &            &            &       num of iter &    100.0\% &     67.5\% &     50.4\% &     44.9\% &     47.2\% &     41.8\% \\
           &            &            &       running time &    100.0\% &    100.1\% &     74.8\% &     66.7\% &     70.2\% &    140.9\% \\
\hline
\end{tabular*}
\begin{minipage}{1\textwidth}
The percents in ``num of iter" row and ``running time" row is the ratio of SOR's value to non-SOR's and the running time is in second(s). Other notations and parameter are the same as in Fig.~\ref{fig:relax-factor-det-method}.
\end{minipage}
\end{table}

\section{Conclusion and Further Work}
\label{sec:fds-6}

Although the force-directed layout of graph is widely used, some problems still exist on both theoretical and practical aspects. Theoretically, there is no perfect way that can at least partly to ensure the convergence and estimate the convergence rate in the iterations of classical force-directed method. And practically the huge computation costs for large-scale graph, as complex network, have become the bottleneck of this method. We try to make improvements in both aspects. A new sufficient condition to guarantee the convergence of iteration process has been proposed and its corresponding convergence rate estimated. Furthermore a practical SOR-based accelerating method for the force-directed layout has been advanced and the variety of strategies to select the relax factor is discussed in detail. Numerical experiment shows that the SOR method could decrease the running time by about 30\%--40\% on average, thus speeding up the force-directed layout for graph drawing efficiently.

Further improvements to the work presented in the paper may include:

(1) We can view the relax factor as a stride length factor in Eq.~\eqref{eq10} and can try to use a variety of stride length factors corresponding to the different vertices, that is to say, $\omega =\left( {\omega_1 ,\omega_2 ,\ldots ,\omega_n } \right)\in \rvecspace{n}$ such that for each vertex $i=1,2,\ldots,n$, it is relaxed by $\tilde {x}_i^{(k+1)} \leftarrow x_i^{(k+1)} +\omega_i \left( {x_i^{(k+1)} -x_i^{(k)} } \right)$.

(2) Speedup of graph layout can facilitate the artistic of layout. For example, we can borrow the idea from genetic algorithm and view a placement as an individual. Initially a group placement is selected randomly and after a few iterations in SOR method, each initial placement will result in a better placement. The best placement among them is to be used in the subsequent process. Furthermore embedding the SOR iterations method into the genetic algorithm framework may be another good selection if the aesthetics of graph layout exceeds the speed of graph drawing.

\section*{Acknowledgements}
This work was supported by 
the National Natural Science Foundation of China (Grant No. 60603054),
the          Natural Science Foundation of Hu'nan Province, China (Grant No. 08JJ4021),
and 
the National Basic Research Program of China (Grant No. 2009CB723803).
We also thank Dr. ZHOU Ting-Ting and Mrs. DONG Yun-Yuan for their suggestions which might be of great help to improve the quality of our manuscript.
The final publication is available at \url{link.springer.com} via \url{http://dx.doi.org/10.1007/s11432-011-4208-9}.

\end{document}